\begin{document}

\newcommand{\defi}{\stackrel{\Delta}{=}}
\newcommand{\A}{{\cal A}}
\newcommand{\B}{{\cal B}}
\newcommand{\U}{{\cal U}}
\newcommand{\G}{{\cal G}}
\newcommand{\cZ}{{\cal Z}}
\newcommand\one{\hbox{1\kern-2.4pt l }}
\newcommand{\Item}{\refstepcounter{Ictr}\item[\left(\theIctr\right)]}
\newcommand{\QQ}{\hphantom{MMMMMMM}}

\newtheorem{Theorem}{Theorem}[section]
\newtheorem{Lemma}{Lemma}[section]
\newtheorem{Corollary}{Corollary}[section]
\newtheorem{Remark}{Remark}[section]
\newtheorem{Example}{Example}[section]
\newtheorem{Proposition}{Proposition}[section]
\newtheorem{Property}{Property}[section]
\newtheorem{Assumption}{Assumption}[section]
\newtheorem{Definition}{Definition}[section]
\newtheorem{Construction}{Construction}[section]
\newtheorem{Condition}{Condition}[section]
\newtheorem{Exa}[Theorem]{Example}
\newcounter{claim_nb}[Theorem]
\setcounter{claim_nb}{0}
\newtheorem{claim}[claim_nb]{Claim}
\newenvironment{cproof}
{\begin{proof}
 [Proof.]
 \vspace{-3.2\parsep}}
{\renewcommand{\qed}{\hfill $\Diamond$} \end{proof}}
\newcommand{\erhao}{\fontsize{21pt}{\baselineskip}\selectfont}
\newcommand{\xiaoerhao}{\fontsize{18pt}{\baselineskip}\selectfont}
\newcommand{\sanhao}{\fontsize{15.75pt}{\baselineskip}\selectfont}
\newcommand{\sihao}{\fontsize{14pt}{\baselineskip}\selectfont}
\newcommand{\xiaosihao}{\fontsize{12pt}{\baselineskip}\selectfont}
\newcommand{\wuhao}{\fontsize{10.5pt}{\baselineskip}\selectfont}
\newcommand{\xiaowuhao}{\fontsize{9pt}{\baselineskip}\selectfont}
\newcommand{\liuhao}{\fontsize{7.875pt}{\baselineskip}\selectfont}
\newcommand{\qihao}{\fontsize{5.25pt}{\baselineskip}\selectfont}
\newcounter{Ictr}
\renewcommand{\theequation}{
\arabic{equation}}
\renewcommand{\thefootnote}{\fnsymbol{footnote}}

\def\A{\mathcal{A}}

\def\C{\mathcal{C}}

\def\V{\mathcal{V}}

\def\I{\mathcal{I}}

\def\Y{\mathcal{Y}}

\def\X{\mathcal{X}}

\def\J{\mathcal{J}}

\def\Q{\mathcal{Q}}

\def\W{\mathcal{W}}

\def\S{\mathcal{S}}

\def\T{\mathcal{T}}

\def\L{\mathcal{L}}

\def\M{\mathcal{M}}

\def\N{\mathcal{N}}
\def\R{\mathbb{R}}
\def\H{\mathbb{H}}

\title{}
\author{}
\begin{center}
\topskip2cm
\LARGE{\bf Safe Feature Identification Rule for Fused Lasso by An Extra Dual Variable}
\end{center}

\begin{center}
\renewcommand{\thefootnote}{\fnsymbol{footnote}}Pan Shang,  Huangyue Chen, Lingchen Kong \footnote{Address: Pan Shang and Huangyue Chen are  with Academy of Mathematics and Systems Science, Chinese Academy of Sciences, Beijing 100190, China; Lingchen Kong is with the School of Mathematics and Statistics, Beijing Jiaotong University, Beijing, 100044, China.\\
E-mail: pshang@amss.ac.cn, hychen2022@amss.ac.cn, konglchen@126.com}\\
\today
\end{center}
\vskip4pt
\textbf{Abstract:} Fused Lasso was proposed to characterize the sparsity of the coefficients and the sparsity of their successive differences for the linear regression. Due to its wide applications, there are many existing algorithms to solve fused Lasso. However, the computation of this model is time-consuming in high-dimensional data sets. To accelerate the calculation of fused Lasso in high-dimension data sets, we build up the safe feature identification rule by introducing an extra dual variable. With a low computational cost, this rule can eliminate inactive features with zero coefficients and identify adjacent features with same coefficients in the solution. To the best of our knowledge, existing screening rules can not be applied to speed up the computation of fused Lasso and our work is the first one to deal with this problem. To emphasize our rule is a unique result that is capable of identifying adjacent features with same coefficients, we name the result as the safe feature identification rule. Numerical experiments on simulation and real data illustrate the efficiency of the rule, which means this rule can reduce the computational time of fused Lasso. In addition, our rule can be embedded into any efficient algorithm and speed up the computational process of fused Lasso.
\vskip1pt
\noindent \emph{Keywords:} Screening rule, Fused Lasso, Inactive features, Adjacent features
\section{Introduction}
To simultaneously characterize the sparsity of coefficients and the sparsity of their successive differences for the linear regression, Tibshirani et al. \cite{tibshirani2005sparsity}  proposed fused Lasso, that is
\begin{equation}\label{eq:1}
\underset{\boldsymbol{\beta}\in \mathbb{R}^{p}}\min~ {\frac{1}{2}\|\textbf{\textit{y}}-X\boldsymbol{\beta}\|^{2}_{2}+\lambda_{1}\|\boldsymbol{\beta}\|_{1}+\lambda_{2}\|D\boldsymbol{\beta}\|_{1}},
\end{equation}
where $\|\boldsymbol{\beta}\|_{1}$ is the Lasso term and $\|D\boldsymbol{\beta}\|_{1}$ is the fused term. Here,
\begin{equation}\label{eq:2}
D=\begin{bmatrix}
1&-1&0&0&\cdots&0&0\\
0&1&-1&0&\cdots&0&0\\
\vdots&\vdots&\vdots&\vdots&\vdots&\vdots&\vdots\\
0&0&0&0&\cdots&1&-1
\end{bmatrix}\in\mathbb{R}^{(p-1)\times p}.
\end{equation}
This model is proved to be widely used in many areas, such as prostate cancer diagnosis \cite{tibshirani2005sparsity}, analysis of noisy and imcomplete images \cite{2012Nonparametric}, geonomic analysis \cite{li2008network}, time-varying networks recovery \cite{ahmed2009recovering} and so on. Due to its wide applications, there are many algorithms  developed to solve fused Lasso. For instance,  SLEP \cite{liu2009slep}, EFLA \cite{liu2010efficient}, ADM \cite{he2012alternating}, stochastic ADMM \cite{ouyang2013stochastic} and semismooth Newton based augmented Lagrangian method \cite{li2018efficiently}. However, with the feature size $p$ increasing, these algorithms may become time-consuming.

In order to overcome the computational difficulty of the sparse supervised models  in high-dimensional data sets, some screening rules are proposed to eliminate inactive features or samples. See, e.g., \cite{fan2008sure,el2010safe,tibshirani2012strong,wang2014scaling,wang2015lasso,wang2015fused,shibagaki2016simultaneous,ndiaye2017gap,kuang2017screening, xiang2016screening,lee2017ensembles,ren2017safe,hong2019scaling,pan2018novel,wang2019simultaneous,bao2020fast,chen2020safe,dantas2021expanding,rakotomamonjy2019screening,shang2021ell,pan2021safe}. These screening rules are used to eliminate inactive features for the sparse linear regression or non support vectors (inactive samples) for the support vector machine. Because fused Lasso is the aim, we just review some results for the sparse linear regression in this paper. For example, Ghaoui et al. \cite{el2010safe} constructed SAFE rules for sparse supervised models, which includes the Lasso, sparse  logistic regression and $\ell_{1}$-norm regularized support vector machine. Tibshirani et al. \cite{tibshirani2012strong} proposed strong rules for the Lasso under the assumption of the unit slope bound. Due to that strong rules are heuristic,  the Karush-Kuhn-Tucker (KKT) condition is needed to be checked. Wang et al. \cite{wang2015lasso} proposed  the dual polytope projection (DPP) and EDPP  for Lasso. \textcolor{black}{These screening rules are proposed to speed up the computation of Lasso, which can be seemed as a reduced case of fused Lasso when $\lambda_{2}=0$. For fused Lasso with $\lambda_{1}=0$, Wang et al. \cite{wang2015fused} introduced two affine transformations and set up the screening rule via the monotonicity of the subdifferentials. However, fused Lasso can not be transformed into Lasso form obviously or  transformations in \cite{wang2015fused} do no need to be applied as Remark \ref{remark3.2}.} In addition, Ndiaye et al. \cite{ndiaye2017gap}  built up  statics and dynamic  gap safe screening rules for group Lasso, which are based on the gap between feasible points of group Lasso and its dual problem.  Chen et al. \cite{chen2020safe} proposed the safe screening rules for the regularized Huber regression. Shang et al. \cite{shang2021ell,shang2022safe} introduced the dual circumscribed sphere technique and apply this technique to build up safe feature screening rules for  quantile Lasso and rank Lasso.

Up to now, existing screening rules mainly focus on sparse linear regression with one regularizer. \textcolor{black}{Because the linear regression with two regularizers usually consider more structures of solution, it is more difficult to be analyzed in theory and computation. In addition, to build up the screening rule, the common structure is based on the dual form and estimation of dual solution. The dual form of regularized models with two regularizers usually is more complex in formulation, which makes the estimation of dual solution more complex and the design of screening rule more challenging. To the best of our knowledge, there are a few works that focus on the linear regression with two regularizers. Tibshirani et al. \cite{tibshirani2012strong} and Xu et al. \cite{xu2019endpp} applied the strong rules and DPP to Elastic Net, respectively. Although Elastic Net has two regularizers, it is easy to be transformed as Lasso form, which potentially deal with one regularizer. In this paper, we tend to build up the safe feature screening rule for fused Lasso, so the strong rules in Tibshirani et al. \cite{tibshirani2012strong} does not work. In addition, fused Lasso can not be transformed to a simple form as Lasso. Ren et al. \cite{ren2017safe} proposed a novel bound propagation algorithm for general Lasso problems. In experiments, they transformed (\ref{eq:1}) to the form with just one regularizer $\lambda_{2}\|\tilde{D}\boldsymbol{\beta}\|$, where $\tilde{D}=(\lambda_{1}I_{p}/\lambda_{2};D)$. Some experiment results of their screening rule on model (1) were reported under $\lambda_{1}/\lambda_{2}$ changing from $0.1$ to $10$, which means they potentially deal with only one regularizer. Wang et al. \cite{wang2019two} constructed the screening rule for the sparse-group Lasso based on an important technique that are the union conjugate function of the Lasso term and the group term. With the help of this technique, they presented the dual problem of sparse-group Lasso with a unique variable. As Remark \ref{remark3.1}, the application of this technique in \cite{wang2019two} can not deal with the screening rule for fused Lasso. }

In this paper, we propose the safe feature identification rule for fused Lasso, by introducing an extra dual variable. \textcolor{black}{To achieve the actual two regularizers purpose and highlight our contribution, values of $\lambda_{1}$ and $\lambda_{2}$ in (\ref{eq:1}) are usually larger than zero in theory and computation.} Firstly, we present the dual problem and the KKT system of fused Lasso, with the help of the conjugate function of the Lasso term. Because the conjugate function of the fused term does not have a simple form, we introduce an extra variable for fused Lasso, which makes two variables in its dual problem. According to the KKT system of fused Lasso, we obtain a rule to screen inactive features and identify adjacent features with same coefficients, under the condition of the known dual solution. Secondly, to be implementable, we estimate the dual solution of fused Lasso. From the equation (3) in Section 3.1, we know that the dual problem contains two variables, while its objective function is only related to one variable. This is a difficulty to set up the identification rule for fused Lasso. To tackle this challenge, we estimate the dual solution in two steps. Based on the variational inequality, the variable in the objective function can be estimated. According to this estimation, we can bound the other variable in the dual solution. Thirdly, under the estimations of the dual solution and KKT system, we get the safe identification screening rule, which is composed of two theorems. One theorem is used to eliminate the inactive features with zero coefficient in the solution, and the other is used to identify adjacent features with same coefficients. Finally, we evaluate the screening rule on some simulation data and real data, which shows that our rule is efficient on eliminating inactive features and speeding up the computation of fused Lasso. In the numerical experiments, we use the SLEP to solve fused Lasso. Actually,  our screening rule can be embedded into any efficient algorithm and reduce the computational time.

 The rest of this paper is organized as follows. We review basic concepts and results in Section \ref{sec:pre}.  In Section \ref{sec:method}, we build up the safe feature identification rule for fused Lasso.  In Section \ref{sec:exp}, we illustrate the efficiency of our rule on some simulation data and real data.  Some conclusions are given in Section \ref{sec:con}.
\section{Preliminaries}\label{sec:pre}
In this section, we review some basic definitions and results in \cite{rockafellar1970convex,hiriart2013convex} for this paper.
\begin{Definition}
Let $f:\mathbb{R}^{n}\rightarrow (-\infty,+\infty]$ be a proper closed convex function and let $\textbf{x}\in dom(f)$. A vector $\textbf{g}\in \mathbb{R}^{n}$ is called a subgradient of $f$ at $\textbf{x}$ if
\begin{center}
$f(\textbf{\textit{y}})\geq f(\textbf{x})+\langle \textbf{g},\textbf{\textit{y}}-\textbf{x}\rangle$,  $\forall \textbf{\textit{y}}\in \mathbb{R}^{n}$.
\end{center}
The set of all subgradients of $f$ at $\textbf{x}$ is called the subdifferential of $f$ at $\textbf{x}$ and is denoted by $\partial f(\textbf{x})$, that is
\begin{center}
$\partial f(\textbf{x})=\left\{\textbf{g}\in \mathbb{R}^{n}: f(\textbf{\textit{y}})\geq f(\textbf{x})+\langle \textbf{g},\textbf{\textit{y}}-\textbf{x}\rangle, \forall \textbf{\textit{y}}\in \mathbb{R}^{n} \right\}$.
\end{center}
\end{Definition}
Now, we review the definition of the conjugate function.
\begin{Definition}
Let $f:\mathbb{R}^{n}\rightarrow (-\infty,+\infty]$ be a proper closed convex function. The conjugate function of $f$ is denoted as $f^{*}$ and $f^{*}:\mathbb{R}^{n}\rightarrow (-\infty,+\infty]$  is defined as
\begin{center}
$f^{*}(\textbf{\textit{y}})=\underset{\textbf{x}\in \mathbb{R}^{n}}\max\left\{\langle \textbf{\textit{y}},\textbf{x}\rangle-f(\textbf{x})\right\}$, $\forall \textbf{\textit{y}}\in \mathbb{R}^{n}.$
\end{center}
\end{Definition}

For any $\textbf{\textit{x}}=(x_{1},x_{2},\cdots,x_{n})^{\top}\in \mathbb{R}^{n}$, the $\ell_{1}$ norm of $\textbf{\textit{x}}$ is defined as $\|\textbf{\textit{x}}\|_{1}=\sum\limits_{i=1}^{n}|x_{i}|$.  The $\ell_{\infty}$ norm of $\textbf{\textit{x}}$ is defined as $\|\textbf{\textit{x}}\|_{\infty}=\max\{|x_{1}|,\cdots,|x_{n}|\}$. The $\ell_{2}$ norm of $\textbf{\textit{x}}$ is defined as $\|\textbf{\textit{x}}\|_{2}=\sqrt{x^{2}_{1}+\cdots +x^{2}_{n}}$. According to these definitions, we review some results in the next example.
\begin{Example}
Let $\textbf{x}=(x_{1},x_{2},\cdots,x_{p})^{\top}\in \mathbb{R}^{p}$ and $\lambda>0$.\\
(a) The subdifferential of $\|\textbf{x}\|_{1}$ is

\textcolor{black}{\begin{equation*}
\partial\|\textbf{x}\|_{1}=\left\{\boldsymbol{\omega}\in\mathbb{R}^{p}:
\omega_{j}\in\rm{sign}(\textit{x}_{\textit{j}})=
\begin{cases}
\{1\},&x_{j}>0\\
\left[-1,1\right],&x_{j}=0\\
\{-1\},&x_{j}<0
\end{cases}, \textit{j}=1,2,\cdots,\textit{p}\right\}.
\end{equation*}}
(b) The conjugate function of $\|\textbf{x}\|_{1}$ is the indictor function of $\ell_{\infty}$ norm unit ball, i.e.,
\begin{equation*}
\left(\|\textbf{x}\|_{1}\right)^{*}=\underset{\textbf{z}\in \mathbb{R}^{p}}\max\left\{\langle \textbf{x},\textbf{z}\rangle-\|\textbf{z}\|_{1}\right\}=\delta_{\|\cdot\|_{\infty}\leq1}(\textbf{x})=
\begin{cases}
0,&\|\textbf{x}\|_{\infty}\leq1;\\
+\infty, &\rm{otherwise}.
\end{cases}
\end{equation*}
\end{Example}

It is worth to mention that the notation 0 in this paper may be a scalar, vector and matrix. One can infer its detailed meaning based on the context.

\section{Safe feature identification rule for fused Lasso}\label{sec:method}
In this section, we show the duality theory of fused Lasso and  propose the safe feature identification rule, which identify inactive features and  adjacent features with same coefficients.
\subsection{Duality theory of fused Lasso}
Fused Lasso is reviewed as
\begin{eqnarray*}
\underset{\boldsymbol{\beta}\in \mathbb{R}^{p}}\min~ {\frac{1}{2}\|\textbf{\textit{y}}-X\boldsymbol{\beta}\|^{2}_{2}+\lambda_{1}\|\boldsymbol{\beta}\|_{1}+\lambda_{2}\|D\boldsymbol{\beta}\|_{1}},
\end{eqnarray*}
where $\boldsymbol{\beta}=(\beta_{1},\cdots, \beta_{p})^{\top}\in \mathbb{R}^{p}$ is the unknown coefficient vector, $\textbf{\textit{y}}=(y_{1},\cdots,y_{n})^{\top}\in \mathbb{R}^{n}$ is the response variable and  $X=(X_{\cdot1},\cdots,X_{\cdot p})\in \mathbb{R}^{n\times p}$ is the prediction matrix with $X_{\cdot j}\in \mathbb{R}^{n}$ denoting the $j_{th}$ feature.  Here, $\lambda_{1}>0$ and $\lambda_{2}>0$ are tuning parameters. The Lasso term $\lambda_{1}\|\boldsymbol{\beta}\|_{1}$ induces the sparsity of coefficients and the fusion term $\lambda_{2}\|D\boldsymbol{\beta}\|_{1}$ encourages the sparsity of their differences. It is clear that the matrix $D$ has full row rank, which infers that the inverse matrix of $DD^{\top}$ exists. If $\lambda_{2}$ is fixed, a larger $\lambda_{1}$ leads to a more sparse solution. If $\lambda_{1}$ is fixed, a larger $\lambda_{2}$ leads to a solution with more successively same elements. Because the solution of (\ref{eq:1}) relies on the choices of $\lambda_{1}$ and $\lambda_{2}$, we denote it as $\boldsymbol{\beta}^{*}(\lambda_{1},\lambda_{2})$.

With the help of Example 2.1 and the Lagrangian duality theory, we prove the next result.
\begin{Lemma}
Let $\textbf{x}=(x_{1},x_{2},\cdots,x_{p})^{\top}\in \mathbb{R}^{p}$ and $D\in\mathbb{R}^{(p-1)\times p }$  defined in (\ref{eq:2}).\\
(a) The subdifferential of $\|D\textbf{x}\|_{1}$ is
$\partial\|D\textbf{x}\|_{1}=D^{\top}\partial_{D\textbf{x}}\|D\textbf{x}\|_{1}$, i.e.,
\textcolor{black}{\begin{equation}\label{eq: subdifferential_fused_lasso}
\begin{aligned}
&\partial\|D\textbf{x}\|_{1}\\
&=\left\{\boldsymbol{\omega}\in\mathbb{R}^{p}:\omega_{j}\in
\begin{cases}
\rm{sign}(\textit{x}_{\textit{j}}-\textit{x}_{\textit{j}+1}), &j=1\\
-\rm{sign}(\textit{x}_{\textit{j}-1}-\textit{x}_{\textit{j}}), &j=p\\
\rm{sign}(\textit{x}_{\textit{j}}-\textit{x}_{\textit{j}+1})-\rm{sign}(\textit{x}_{\textit{j}-1}-\textit{x}_{\textit{j}}),
&\rm{otherwise}
\end{cases}\right\}.
\end{aligned}
\end{equation}}
(b) The conjugate function of $\|D\textbf{x}\|_{1}$ is showed as follows.
\begin{equation}\label{eq: conjugate_fused_lasso}
\left(\|D\textbf{x}\|_{1}\right)^{*}=
\begin{cases}
0,& \left\|(DD^{\top})^{-1}D\textbf{\textit{x}}\right\|_{\infty}\leq 1;\\
+\infty, &\rm{otherwise}.
\end{cases}
\end{equation}
\end{Lemma}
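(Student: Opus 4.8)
The plan is to treat both parts as instances of the standard convex calculus for the composition $\textbf{x}\mapsto\|D\textbf{x}\|_{1}=h(D\textbf{x})$ with $h=\|\cdot\|_{1}$, exploiting two structural facts: that $h$ is finite (hence continuous) on all of $\mathbb{R}^{p-1}$, and that $D$ has full row rank so that $DD^{\top}$ is invertible and $D^{\top}$ is injective.

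For part (a), I would invoke the subdifferential chain rule for a convex function composed with a linear map. Because $h(\textbf{u})=\|\textbf{u}\|_{1}$ is finite everywhere, the usual constraint qualification is automatic, and $\partial(h\circ D)(\textbf{x})=D^{\top}\partial h(D\textbf{x})$, which is exactly the asserted identity $\partial\|D\textbf{x}\|_{1}=D^{\top}\partial_{D\textbf{x}}\|D\textbf{x}\|_{1}$. What remains is bookkeeping: by Example 2.1(a) a generic element of $\partial h(D\textbf{x})$ is a vector $\textbf{v}\in\mathbb{R}^{p-1}$ with $v_{i}\in\mathrm{sign}((D\textbf{x})_{i})=\mathrm{sign}(x_{i}-x_{i+1})$, and I would then compute $(D^{\top}\textbf{v})_{j}$ row by row. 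Since row $j$ of $D^{\top}$ receives a $+1$ from the $j$-th difference and a $-1$ from the $(j-1)$-th difference, the three cases $j=1$ (only the $+1$ survives), $j=p$ (only the $-1$ survives), and $1<j<p$ (both terms appear) reproduce precisely the piecewise formula of part (a).

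For part (b), I would characterise when the conjugate vanishes. As $\|D\cdot\|_{1}$ is nonnegative and vanishes at the origin, its conjugate is nonnegative, and $(\|D\cdot\|_{1})^{*}(\textbf{x})=0$ holds exactly when $\langle\textbf{x},\textbf{w}\rangle\le\|D\textbf{w}\|_{1}$ for every $\textbf{w}$. Using Example 2.1(b) to write $\|D\textbf{w}\|_{1}=\sup_{\|\textbf{v}\|_{\infty}\le 1}\langle D^{\top}\textbf{v},\textbf{w}\rangle$ exhibits $\|D\cdot\|_{1}$ as the support function of the set $D^{\top}B_{\infty}$, where $B_{\infty}=\{\textbf{v}:\|\textbf{v}\|_{\infty}\le 1\}$; hence the inequality holds for all $\textbf{w}$ if and only if $\textbf{x}\in D^{\top}B_{\infty}$, equivalently $(\|D\cdot\|_{1})^{*}(\textbf{x})=\inf\{\delta_{B_{\infty}}(\textbf{v}):D^{\top}\textbf{v}=\textbf{x}\}$ via the conjugate-of-composition formula. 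Finiteness thus forces $\textbf{x}\in\mathrm{range}(D^{\top})$, and here the full-row-rank hypothesis is decisive: since $D^{\top}$ is injective, the equation $D^{\top}\textbf{v}=\textbf{x}$ has the unique solution $\textbf{v}=(DD^{\top})^{-1}D\textbf{x}$ (obtained by applying $D$ to both sides and inverting $DD^{\top}$). Substituting this unique preimage into $\|\textbf{v}\|_{\infty}\le 1$ collapses the membership condition to $\|(DD^{\top})^{-1}D\textbf{x}\|_{\infty}\le 1$, yielding the claimed dichotomy of part (b).

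The step I expect to be most delicate is the membership analysis in (b), because $D$ has the all-ones vector $\mathbf{1}$ in its kernel, so $\|D\textbf{w}\|_{1}$ is insensitive to the shift $\textbf{w}\mapsto\textbf{w}+t\mathbf{1}$. One must verify that when $\textbf{x}$ has a nonzero component along $\mathbf{1}$, i.e. $\textbf{x}\notin\mathrm{range}(D^{\top})$, the choice $\textbf{w}=t\mathbf{1}$ drives $\langle\textbf{x},\textbf{w}\rangle$ to $+\infty$ while the penalty stays zero, so the conjugate is $+\infty$; this is what guarantees the feasible set in the infimal formula is nonempty exactly on $\mathrm{range}(D^{\top})$, after which invertibility of $DD^{\top}$ makes the preimage unique and the $\ell_{\infty}$ constraint explicit. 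By comparison, the chain-rule qualification in (a) and the column-wise evaluation of $D^{\top}\textbf{v}$ are routine.
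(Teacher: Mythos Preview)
Your treatment of part (a) is the same as the paper's: both simply invoke the subdifferential chain rule for a convex function composed with a linear map, then read off the three cases from the columns of $D^{\top}$.

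For part (b) your argument is correct but follows a somewhat different route. The paper rewrites the conjugate as a constrained maximisation by introducing an auxiliary variable $\textbf{\textit{w}}=D\textbf{\textit{z}}$, forms the Lagrangian with multiplier $\boldsymbol{\theta}$, maximises over $(\textbf{\textit{z}},\textbf{\textit{w}})$ to obtain the conditions $\textbf{\textit{x}}+D^{\top}\boldsymbol{\theta}=0$ and $\|\boldsymbol{\theta}\|_{\infty}\le 1$, and then closes the gap via Slater's condition. You instead observe directly that $\|D\cdot\|_{1}$ is the support function of $D^{\top}B_{\infty}$, so its conjugate is the indicator of that set, and then use injectivity of $D^{\top}$ (from full row rank of $D$) to solve $D^{\top}\textbf{\textit{v}}=\textbf{\textit{x}}$ uniquely as $\textbf{\textit{v}}=(DD^{\top})^{-1}D\textbf{\textit{x}}$. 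Your approach is slightly more elementary in that it bypasses the explicit Lagrangian and strong-duality step, and it has the additional merit of isolating the range condition $\textbf{\textit{x}}\in\mathrm{range}(D^{\top})$: as you correctly flag, when $\textbf{\textit{x}}$ has a nonzero component along $\mathbf{1}$ the conjugate is $+\infty$ even though $\|(DD^{\top})^{-1}D\textbf{\textit{x}}\|_{\infty}$ can still be small, a point that the paper's statement and proof leave implicit.
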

\begin{proof}
The result (a) can be easily obtained with the chain rule. Here, we present the proof of the result (b). According to Definition 2.2, we know that
\begin{equation}\tag{*}
\begin{split}
\left(\|D\textbf{\textit{x}}\|_{1}\right)^{*}&=
\underset{\textbf{\textit{z}}\in\mathbb{R}^{p}}\max\left\{\langle\textbf{\textit{z}},\textbf{\textit{x}}\rangle-\|D\textbf{\textit{z}}\|_{1}\right\}\\
&=\underset{\textbf{\textit{z}}\in\mathbb{R}^{p},\textbf{\textit{w}}\in\mathbb{R}^{p-1}}
\max\left\{\langle\textbf{\textit{z}},\textbf{\textit{x}}\rangle
-\|\textbf{\textit{w}}\|_{1}: D\textbf{\textit{z}}-\textbf{\textit{w}}=0\right\}.
\end{split}
\end{equation}
By introducing the Lagrangian multiplier $\boldsymbol{\theta}\in\mathbb{R}^{p-1}$, the Lagrangian function of (*) is
\begin{center}
$\textit{L}\left(\textbf{\textit{z}},\textbf{\textit{w}}, \boldsymbol{\theta}\right)
=\langle\textbf{\textit{z}},\textbf{\textit{x}}\rangle-\|\textbf{\textit{w}}\|_{1}
+\langle D\textbf{\textit{z}}-\textbf{\textit{w}},\boldsymbol{\theta}\rangle$
\end{center}
and
\begin{align*}
&\underset{\textbf{\textit{z}}\in\mathbb{R}^{p},\textbf{\textit{w}}\in\mathbb{R}^{p-1}}\max\textit{L}\left(\textbf{\textit{z}},\textbf{\textit{w}}, \boldsymbol{\theta}\right)=\underset{\textbf{\textit{z}}\in\mathbb{R}^{p}}\max \left\{\langle \textbf{\textit{x}}+D^{\top}\boldsymbol{\theta},\textbf{\textit{z}}\rangle\right\}
+\underset{\textbf{\textit{w}}\in\mathbb{R}^{p-1}}\max \left\{\langle\textbf{\textit{w}},-\boldsymbol{\theta}\rangle-\|\textbf{\textit{w}}\|_{1}\right\}\\
&\quad\quad =\begin{cases}
0, &\textbf{\textit{x}}+D^{\top}\boldsymbol{\theta}=0, \|\boldsymbol{\theta}\|_{\infty}\leq 1,\\
+\infty, &\rm{otherwise}.
\end{cases}
\end{align*}
According to this result, the dual problem of (*) is
\begin{align*}
&\underset{\boldsymbol{\theta}}\min~\underset{\textbf{\textit{z}},\textbf{\textit{w}}}\max~\textit{L}\left(\textbf{\textit{z}},\textbf{\textit{w}}, \boldsymbol{\theta}\right)=\begin{cases}
0, &\left\|(DD^{\top})^{-1}D\boldsymbol{x}\right\|_{\infty}\leq1,\\
+\infty, &\rm{otherwise}.
\end{cases}
\end{align*}
Because there is a feasible point $(\textbf{\textit{z}},\textbf{\textit{w}})=(0,0)$ for the problem (*), the Slater's constraint qualification holds and
\begin{center}
$\left(\|D\textbf{\textit{x}}\|_{1}\right)^{*}=
\underset{\boldsymbol{\theta}}\min~
\underset{\textbf{\textit{z}},\textbf{\textit{w}}}\max~\textit{L}\left(\textbf{\textit{z}},\textbf{\textit{w}}, \boldsymbol{\theta}\right)$.
\end{center}
The desired result follows.
\end{proof}

In order to get the dual problem of (\ref{eq:1}), we introduce two variables $\boldsymbol{\alpha}\in \mathbb{R}^{n}$, $\boldsymbol{\gamma}\in \mathbb{R}^{p-1}$ and transform (\ref{eq:1}) to a constraint problem as below.
\begin{equation*}
\begin{split}
\underset{\boldsymbol{\beta}\in\mathbb{R}^{p},\boldsymbol{\alpha}\in\mathbb{R}^{n},\boldsymbol{\gamma}\in\mathbb{R}^{p-1}}\min&\frac{1}{2}\|\boldsymbol{\alpha}\|^{2}_{2}+\lambda_{1}\|\boldsymbol{\beta}\|_{1}+\lambda_{2}\|\boldsymbol{\gamma}\|_{1}\\
s.t.~~~~~~~~~~~&\textbf{\textit{y}}-X\boldsymbol{\beta}-\boldsymbol{\alpha} = 0, \\
&  D\boldsymbol{\beta}-\boldsymbol{\gamma}=0.
\end{split}
\end{equation*}
By introducing Lagrangian multipliers $\textbf{\textit{u}}=(\textit{u}_{1},\textit{u}_{2},\cdots,\textit{u}_{n})^{\top}\in \mathbb{R}^{n}$ and $\textbf{\textit{v}}=(\textit{v}_{1},\textit{v}_{2},\cdots,\textit{v}_{p-1})^{\top}\in \mathbb{R}^{p-1}$,
 we obtain the Lagrangian function of this model, which is
\begin{equation*}
\begin{aligned}
&\textit{L}\left(\boldsymbol{\beta},\boldsymbol{\alpha},\boldsymbol{\gamma}, \textbf{\textit{u}},\textbf{\textit{v}}\right)\\
&=\frac{1}{2}\|\boldsymbol{\alpha}\|^{2}_{2}+\lambda_{1}\|\boldsymbol{\beta}\|_{1}+\lambda_{2}\|\boldsymbol{\gamma}\|_{1}+\langle\textbf{\textit{u}},\textbf{\textit{y}}-X\boldsymbol{\beta}-\boldsymbol{\alpha}\rangle
+\langle\textbf{\textit{v}},D\boldsymbol{\beta}-\boldsymbol{\gamma}\rangle.
\end{aligned}
\end{equation*}
According to the results in Example 2.1, we obtain that
\begin{align*}
&\underset{\boldsymbol{\beta}\in\mathbb{R}^{p},\boldsymbol{\alpha}\in\mathbb{R}^{n},\boldsymbol{\gamma}\in\mathbb{R}^{p-1}}\min\textit{L}\left(\boldsymbol{\beta},\boldsymbol{\alpha},\boldsymbol{\gamma}, \textbf{\textit{u}},\textbf{\textit{v}}\right)\\
&=\underset{\boldsymbol{\beta}\in\mathbb{R}^{p}}\min \left\{\lambda_{1}\|\boldsymbol{\beta}\|_{1}-\langle X^{\top}\textbf{\textit{u}}-D^{\top}\textbf{\textit{v}},\boldsymbol{\beta}\rangle\right\}
+\underset{\boldsymbol{\alpha}\in\mathbb{R}^{n}}\min \left\{\frac{1}{2}\|\boldsymbol{\alpha}\|^{2}_{2}-\langle\textbf{\textit{u}},\boldsymbol{\alpha}\rangle\right\}\\
&+
\underset{\boldsymbol{\gamma}\in\mathbb{R}^{p-1}}\min \left\{\lambda_{2}\|\boldsymbol{\gamma}\|_{1}-\langle \textbf{\textit{v}},\boldsymbol{\gamma}\rangle\right\}
+\langle\textbf{\textit{u}},\textbf{\textit{y}}\rangle\\
&=-\delta_{\|\cdot\|_{\infty}\leq\lambda_{1}}\left(X^{\top}\textbf{\textit{u}}-D^{\top}\textbf{\textit{v}}\right)
-\delta_{\|\cdot\|_{\infty}\leq\lambda_{2}}\left(\textbf{\textit{v}}\right)-\frac{1}{2}\|\textbf{\textit{u}}\|^{2}_{2}+\langle \textbf{\textit{u}},\textbf{\textit{y}}\rangle.
\end{align*}
Therefore, the Lagrangian dual form of the model (\ref{eq:1}) is $$\underset{\textbf{\textit{u}},\textbf{\textit{v}}}\max~
\underset{\boldsymbol{\beta},\boldsymbol{\alpha},\boldsymbol{\gamma}}\min~\textit{L}\left(\boldsymbol{\beta},\boldsymbol{\alpha},\boldsymbol{\gamma}, \textbf{\textit{u}},\textbf{\textit{v}}\right).$$
By a simple computation, we get the dual problem of (\ref{eq:1}) as
\begin{equation}\label{eq:3}
\begin{split}
\underset{\textbf{\textit{u}}\in \mathbb{R}^{n},\textbf{\textit{v}}\in \mathbb{R}^{p-1}}\min  &\frac{1}{2}\left\|\textbf{\textit{u}}\right\|^{2}_{2}-\langle\textbf{\textit{u}},\textbf{\textit{y}}\rangle\\
s.t. \quad &\|X^{\top}\textbf{\textit{u}}-D^{\top}\textbf{\textit{v}}\|_{\infty}\leq \lambda_{1}, \\
\quad \quad ~ &\|\textbf{\textit{v}}\|_{\infty}\leq\lambda_{2}.
\end{split}
\end{equation}
Because the solutions of this problem rely on the choices of $\lambda_{1}$ and $\lambda_{2}$, we denote them as $\left(\textbf{\textit{u}}^{*}(\lambda_{1},\lambda_{2}),\textbf{\textit{v}}^{*}(\lambda_{1},\lambda_{2})\right)$.

The KKT system of (\ref{eq:1}) is
\begin{equation}\label{eq:KKT}
\begin{cases}
X^{\top}\textbf{\textit{u}}-D^{\top}\textbf{\textit{v}}\in \lambda_{1}\partial\|\boldsymbol{\beta}\|_{1},\\
\textbf{\textit{v}}\in \lambda_{2}\partial\|\boldsymbol{\gamma}\|_{1},\\
\textbf{\textit{y}}-X\boldsymbol{\beta}-\textbf{\textit{u}}=0, D\boldsymbol{\beta}-\boldsymbol{\gamma}=0.
\end{cases}
\end{equation}
Any pair $(\boldsymbol{\beta},\textbf{\textit{u}},\textbf{\textit{v}})$ satisfies the KKT system, it is a KKT point. According to the Slater's constraint qualification, it is easy to obtain that  solutions of (\ref{eq:1}) and (\ref{eq:3}) compose a KKT point.

\begin{Remark}\label{remark3.1}
Here, we show some analysis of the dual problem (\ref{eq:3}). The dual problem (\ref{eq:3}) is obtained by introducing variables $\boldsymbol{\alpha}$ and $\boldsymbol{\gamma}$ such that $\textbf{\textit{y}}-X\boldsymbol{\beta}-\boldsymbol{\alpha} =0$  and $D\boldsymbol{\beta}-\boldsymbol{\gamma}=0$. It is easy to get the dual form remains same if  $\boldsymbol{\alpha}$ and $\boldsymbol{\gamma}$ satisfy that $\textbf{\textit{y}}-X\boldsymbol{\beta}-\boldsymbol{\alpha}=0$ and $\boldsymbol{\beta}-\boldsymbol{\gamma}=0$.  For this problem, there are two variables $\textbf{\textit{u}}$ and $\textbf{\textit{v}}$, while the objective function only contains one variable $\textbf{\textit{u}}$. The variable $\textbf{\textit{v}}$ is introduced to restrict the feasible set of $\textbf{\textit{u}}$. This is a unique phenomenon for fused Lasso. For the problems with one regularizer, this phenomenon does not exists because there is only one dual variable, which means the existing screening rules for the linear regression with one regularizer can not speed up the computation of fused Lasso.

For the problems with two  regularizer, such as sparse-group Lasso, Elastic Net, $\ell_{1}$ regularized SVM and so on, \cite{wang2019two} proposed a method to give the Fenchel's dual problem that only contains one dual variable. Although the Fenchel's dual problem of fused Lasso can also be showed as that of \cite{wang2019two}, the complex form of the subdifferential of the fusion term in Lemma 3.1 makes the screening rule in \cite{wang2019two} can not be applied to fused Lasso. There is a detailed analysis as follows.

With Example 2.1, Lemma 3.1 and \cite[Theorem 4.17]{beck2017first},
\begin{equation}
\begin{aligned}
\left(\lambda_{1}\|\boldsymbol{\beta}\|+\lambda_{2}\|D\boldsymbol{\beta}\|\right)^{*}
&=\delta_{\|\cdot\|_{\infty}\leq \lambda_{1}}(\boldsymbol{\beta})\square \delta_{\|(DD^{\top})^{-1}D\cdot\|_{\infty}\leq \lambda_{2}}(\boldsymbol{\beta})\\
&=\underset{\boldsymbol{\gamma}}\inf\left\{\delta_{\|\cdot\|_{\infty}\leq \lambda_{1}}(\boldsymbol{\gamma})+\delta_{\left\|(DD^{\top})^{-1}D\cdot\right\|_{\infty}\leq \lambda_{2}}(\boldsymbol{\beta}-\boldsymbol{\gamma})\right\}\\
&=\delta_{\|\cdot\|_{\infty}\leq \lambda_{2}}\left((DD^{\top})^{-1}D(\boldsymbol{\beta}-\Pi_{\|\cdot\|\leq\lambda_{1}}(\boldsymbol{\beta}))\right),
\end{aligned}
\end{equation}
where $\Pi_{\|\cdot\|\leq\lambda_{1}}(\boldsymbol{\beta})$ is the projection operator and $\square$ is the infimal convolution. Under this formulation, the dual problem of fused Lasso can be expressed as
\begin{equation*}
\begin{aligned}
\underset{\boldsymbol{\theta}\in \mathbb{R}^{n}}\min ~ &\frac{1}{2}\left\|\boldsymbol{\theta}\right\|^{2}_{2}-\langle\boldsymbol{\theta},\textbf{\textit{y}}\rangle\\
s.t. ~ &\left\|(DD^{\top})^{-1}D(X^{\top}\boldsymbol{\theta}-\Pi_{\|\cdot\|\leq\lambda_{1}}(X^{\top}\boldsymbol{\theta}))\right\|_{\infty}\leq \lambda_{2}.
\end{aligned}
\end{equation*}
In this case, the KKT system is
\begin{equation*}
\begin{cases}
X^{\top}\boldsymbol{\theta}\in \lambda_{1}\partial\|\boldsymbol{\beta}\|_{1}+\lambda_{2}\partial\|D\boldsymbol{\beta}\|_{1},\\
\textbf{\textit{y}}-X\boldsymbol{\beta}-\boldsymbol{\theta}=0.
\end{cases}
\end{equation*}
The above results are all followed the analysis in  \cite{wang2019two}, however these new KKT system can not provide specific criteria about the zero elements or adjacent equivalent elements of  of $\boldsymbol{\beta}^{*}(\lambda_{1},\lambda_{2})$. This is the core difference between our work and  \cite{wang2019two}. The pattern in \cite{wang2019two} fails for fused Lasso mainly due to the subdifferential result of fused term not as simple as $\ell_{1}$-norm or grouped $\ell_{2}$-norm.
\end{Remark}

Based on KKT system, subdifferentials of the $\ell_{1}$ norm in Example 2.1 and fused term in Lemma 3.1, we can get the following lemma. Unlike other models, there is a unique phenomenon of fused Lasso as in (ii) of this lemma.
\begin{Lemma}
Let $\lambda_{1}\geq0$ and $\lambda_{2}\geq0$.\\
(i) For any $j\in\{1,2,\cdots,p\}$, $\beta^{*}_{j}(\lambda_{1},\lambda_{2})=0$ if
\begin{center}
$\Big|X_{.j}^{\top}\textbf{\textit{u}}^{*}(\lambda_{1},\lambda_{2})-D_{.j}^{\top}\textbf{\textit{v}}^{*}(\lambda_{1},\lambda_{2})\Big|<\lambda_{1}$.
\end{center}
(ii) For any $j\in\{1,2,\cdots,p-1\}$, $\beta^{*}_{j}(\lambda_{1},\lambda_{2})=\beta^{*}_{j+1}(\lambda_{1},\lambda_{2})$ if
\begin{center}
$\big|v_{j}^{*}(\lambda_{1},\lambda_{2})\big|<\lambda_{2}$.
\end{center}
\end{Lemma}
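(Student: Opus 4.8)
The plan is to establish both claims by contraposition, reading off the coordinatewise structure of the two subdifferential inclusions in the KKT system \eqref{eq:KKT}. The point is that, because of the extra dual variable $\textbf{\textit{v}}$ and the auxiliary variable $\boldsymbol{\gamma}=D\boldsymbol{\beta}$, each inclusion is a membership relation between a fixed vector and a product of elementary $\ell_{1}$-subdifferentials, so it decouples coordinate by coordinate and I only need the three-case description of $\partial\|\cdot\|_{1}$ from Example 2.1(a). No Lagrangian or conjugate computation is required beyond what \eqref{eq:KKT} already encodes.

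For part (i), suppose toward the contrapositive that $\beta_{j}^{*}(\lambda_{1},\lambda_{2})\neq 0$. The first line of \eqref{eq:KKT}, evaluated at the optimal KKT point $(\boldsymbol{\beta}^{*},\textbf{\textit{u}}^{*},\textbf{\textit{v}}^{*})$, gives $X^{\top}\textbf{\textit{u}}^{*}-D^{\top}\textbf{\textit{v}}^{*}\in\lambda_{1}\partial\|\boldsymbol{\beta}^{*}\|_{1}$; taking the $j$-th coordinate, $X_{.j}^{\top}\textbf{\textit{u}}^{*}-D_{.j}^{\top}\textbf{\textit{v}}^{*}=\lambda_{1}\omega_{j}$ for some $\omega_{j}\in\mathrm{sign}(\beta_{j}^{*})$. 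By Example 2.1(a), when $\beta_{j}^{*}>0$ we have $\omega_{j}=1$ and when $\beta_{j}^{*}<0$ we have $\omega_{j}=-1$, so in either case $|X_{.j}^{\top}\textbf{\textit{u}}^{*}-D_{.j}^{\top}\textbf{\textit{v}}^{*}|=\lambda_{1}$. Thus $\beta_{j}^{*}\neq 0$ forces this quantity to equal $\lambda_{1}$, and the contrapositive is exactly the stated implication that the strict inequality $<\lambda_{1}$ yields $\beta_{j}^{*}=0$.

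Part (ii) runs by the same mechanism applied to the second inclusion, which is where the introduction of $\boldsymbol{\gamma}$ pays off. Since $\boldsymbol{\gamma}^{*}=D\boldsymbol{\beta}^{*}$ has $j$-th entry $\gamma_{j}^{*}=\beta_{j}^{*}-\beta_{j+1}^{*}$, the condition $\beta_{j}^{*}\neq\beta_{j+1}^{*}$ is precisely $\gamma_{j}^{*}\neq 0$. Assuming this, the second line $\textbf{\textit{v}}^{*}\in\lambda_{2}\partial\|\boldsymbol{\gamma}^{*}\|_{1}$ read in coordinate $j$ gives $v_{j}^{*}=\lambda_{2}\,\mathrm{sign}(\gamma_{j}^{*})=\pm\lambda_{2}$, hence $|v_{j}^{*}|=\lambda_{2}$; contraposition again yields the claim. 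I expect the only genuinely delicate step to be the modeling choice rather than any computation: one must route the fusion condition through the coupling $\textbf{\textit{v}}^{*}\in\lambda_{2}\partial\|\boldsymbol{\gamma}^{*}\|_{1}$ with $\boldsymbol{\gamma}^{*}=D\boldsymbol{\beta}^{*}$, instead of attempting to use the more intricate subdifferential $\partial\|D\boldsymbol{\beta}\|_{1}$ of Lemma 3.1(a) directly. It is exactly this extra variable that makes the adjacent-feature condition separate cleanly across coordinates.
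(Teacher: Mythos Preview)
Your proposal is correct and follows essentially the same route as the paper: both parts read off the coordinatewise consequences of the KKT inclusions $X^{\top}\textbf{\textit{u}}^{*}-D^{\top}\textbf{\textit{v}}^{*}\in\lambda_{1}\partial\|\boldsymbol{\beta}^{*}\|_{1}$ and $\textbf{\textit{v}}^{*}\in\lambda_{2}\partial\|\boldsymbol{\gamma}^{*}\|_{1}$ (with $\boldsymbol{\gamma}^{*}=D\boldsymbol{\beta}^{*}$) using the sign description of $\partial\|\cdot\|_{1}$ from Example~2.1(a). The only cosmetic difference is that you spell out the contrapositive explicitly, whereas the paper states the direct implication and invokes Example~2.1 in a single line.
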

\begin{proof}
(i) Because that $\textbf{\textit{u}}^{*}(\lambda_{1},\lambda_{2})$ and $\textbf{\textit{v}}^{*}(\lambda_{1},\lambda_{2})$ satisfy the KKT system, it holds
\begin{center}
$X^{\top}\textbf{\textit{u}}^{*}(\lambda_{1},\lambda_{2})-D^{\top}\textbf{\textit{v}}^{*}(\lambda_{1},\lambda_{2})\in \lambda_{1}\partial\|\boldsymbol{\beta}^{*}(\lambda_{1},\lambda_{2})\|_{1}$,
\end{center}
which leads to
\begin{center}
$X_{.j}^{\top}\textbf{\textit{u}}^{*}(\lambda_{1},\lambda_{2})-D_{.j}^{\top}\textbf{\textit{v}}^{*}(\lambda_{1},\lambda_{2})
\in\lambda_{1}$sign$\left(\beta^{*}_{j}(\lambda_{1},\lambda_{2})\right)$.
\end{center}
Then, the desired result is obtained by the subdifferential of the $\ell_{1}$ norm in Example 2.1.

(ii) From the KKT system again, we know that
\begin{center}
$\textbf{\textit{v}}^{*}(\lambda_{1},\lambda_{2})\in \lambda_{2}\partial \left\|\boldsymbol{\gamma}^{*}(\lambda_{1},\lambda_{2})\right\|_{1}$,
$D\boldsymbol{\beta}^{*}(\lambda_{1},\lambda_{2})=\boldsymbol{\gamma}^{*}(\lambda_{1},\lambda_{2})$,
\end{center}
which leads to $\textbf{\textit{v}}^{*}(\lambda_{1},\lambda_{2})\in \lambda_{2}\partial_{D\boldsymbol{\beta}^{*}(\lambda_{1},\lambda_{2})} \left\|D\boldsymbol{\beta}^{*}(\lambda_{1},\lambda_{2})\right\|_{1}$. That is
\begin{center}
$\textit{v}_{j}^{*}(\lambda_{1},\lambda_{2})\in\lambda_{2}$sign$\left(\beta^{*}_{j}(\lambda_{1},\lambda_{2})-\beta^{*}_{j+1}(\lambda_{1},\lambda_{2})\right)$.
\end{center}
Combining Example 2.1, the desired result is obtained.
\end{proof}
\begin{Remark}\label{remark3.2}
In \cite{wang2015fused}, they introduced an affine transformation and transform the dual problem of their model as a projection problem. The transformation is
\begin{equation*}
T=\begin{bmatrix}
1 & 1 & 1 &\cdots &1\\
0 & 1 & 1 &\cdots &1\\
0 & 0 & 1 &\cdots &1\\
\vdots & \vdots &\vdots &\ddots &\vdots\\
0 & 0 & 0 &\cdots &1
\end{bmatrix}.
\end{equation*}
Following the computation in \cite{wang2015fused}, with the matrix $T$, it holds
\begin{equation}\label{eq: wrong}
\begin{aligned}
T\textbf{\textit{v}}^{*}(\lambda_{1},\lambda_{2})
&=\left(\sum\limits_{j=1}^{p-1}\textit{v}_{j}^{*}(\lambda_{1},\lambda_{2}), \sum\limits_{j=2}^{p-1}\textit{v}_{j}^{*}(\lambda_{1},\lambda_{2}),\cdots, \textit{v}_{p-1}^{*}(\lambda_{1},\lambda_{2})\right)^{\top}\\
&=\lambda_{2}
\begin{pmatrix}
\sum\limits_{j=1}^{p-1}\rm{sign}\left(\beta_{j}^{*}(\lambda_{1},\lambda_{2})-\beta_{j+1}^{*}(\lambda_{1},\lambda_{2})\right)\\
\sum\limits_{j=2}^{p-1}\rm{sign}\left(\beta_{j}^{*}(\lambda_{1},\lambda_{2})-\beta_{j+1}^{*}(\lambda_{1},\lambda_{2})\right)\\
\vdots\\
\rm{sign}\left(\beta_{\textit{p}-1}^{*}(\lambda_{1},\lambda_{2})-\beta_{\textit{p}}^{*}(\lambda_{1},\lambda_{2})\right)
\end{pmatrix}.
\end{aligned}
\end{equation}
Based on the equation (\ref{eq: wrong}), the following results hold.

(i) $\beta_{\textit{p}-1}^{*}(\lambda_{1},\lambda_{2})=\beta_{\textit{p}}^{*}(\lambda_{1},\lambda_{2})$ holds, if $\textit{v}_{p-1}^{*}(\lambda_{1},\lambda_{2})<\lambda_{2}$.

(ii) $\beta_{\textit{p}-2}^{*}(\lambda_{1},\lambda_{2})=\beta_{\textit{p}-1}^{*}(\lambda_{1},\lambda_{2})$ holds, if $\big|\textit{v}_{p-1}^{*}(\lambda_{1},\lambda_{2})\big|=\lambda_{2}$ and $0<\big|\textit{v}_{p-2}^{*}(\lambda_{1},\lambda_{2})\big|<2\lambda_{2}$.

(iii) For any $j\in\{p-3,p-4,\cdots,2,1\}$, the condition of $\beta_{\textit{j}}^{*}(\lambda_{1},\lambda_{2})=\beta_{\textit{j}+1}^{*}(\lambda_{1},\lambda_{2})$ will be more complex.

So, there is no need to apply the technique in \cite{wang2015fused} to accelerate the computation of fused Lasso.
\end{Remark}

The result $\beta^{*}_{j}(\lambda_{1},\lambda_{2})=0$ means the feature $X_{.j}$ is uncorrelated with $\textbf{\textit{y}}$. $\beta^{*}_{j}(\lambda_{1},\lambda_{2})=\beta^{*}_{j+1}(\lambda_{1},\lambda_{2})$ leads to a conclusion that we only need to calculate one of these two values. So $X_{.j}$ can be eliminated if $\beta^{*}_{j}(\lambda_{1},\lambda_{2})=0$, when solving (\ref{eq:1}) under $\lambda_{1}$ and $\lambda_{2}$. Features $X_{.j}$ and $X_{.j+1}$ can be added, the $j$-th row of $D$ can be eliminated, the column number of $X$ and dimension of $\boldsymbol{\beta}^{*}(\lambda_{1},\lambda_{2})$ can be reduced by 1, if the result $\beta^{*}_{j}(\lambda_{1},\lambda_{2})=\beta^{*}_{j+1}(\lambda_{1},\lambda_{2})$ holds. This lemma shows the basic idea that can be used to identify inactive features and adjacent features with same coefficients. However, this result is based on the known solutions of (\ref{eq:3}), which usually needs to be calculated with a proper algorithm. Therefore, we \textcolor{black}{need} to estimate the solutions of (\ref{eq:3}) with a low computational cost.
\subsection{Estimation of the dual solution}
When $\lambda_{1}$ is sufficiently large,  the solution $\boldsymbol{\beta}^{*}(\lambda_{1},\lambda_{2})$ has to be 0 to attain the minimal value of the objective function of (\ref{eq:1}). Similarly, $D\boldsymbol{\beta}^{*}(\lambda_{1},\lambda_{2})$ has to be 0 if $\lambda_{2}$ is large enough.  In the next lemma, we present the lower bounds of $\lambda_{1}$ and $\lambda_{2}$ to achieve these.
\begin{Lemma}
(i) Let $\lambda_{2}>0$ be any fixed constant. If $\boldsymbol{\beta}^{*}(\lambda_{1},\lambda_{2})=0$, we know that $\lambda_{1}\geq\lambda^{max}_{1}(\lambda_{2})$,  where
\textcolor{black}{
\begin{equation}
\lambda^{max}_{1}(\lambda_{2})
=\max\left\{2\lambda_{2}+\kappa_{1},\lambda_{2}+\kappa_{2}\right\},
\end{equation}
$\kappa_{1}=\max\{|X^{\top}_{.j}\textbf{\textit{y}}|:j\in\{2,\cdots,p-1\}\}$ and $\kappa_{2}=\max\{|X^{\top}_{.j}\textbf{\textit{y}}|:j\in\{1,p\}\}$.} On the other hand, 0 is a solution of (\ref{eq:1}) if $\lambda_{1}\geq\lambda^{max}_{1}(\lambda_{2})$.

(ii) Let $\lambda_{1}\geq0$ be any fixed constant. If $\boldsymbol{\beta}^{*}(\lambda_{1},\lambda_{2})=\xi\textbf{1}_{p}$ with $\xi$ being a nonzero constant, i.e., $D\boldsymbol{\beta}^{*}(\lambda_{1},\lambda_{2})=0$,  we know that $\lambda_{2}\geq\lambda^{max}_{2}(\lambda_{1})$, where
\begin{equation}
\begin{aligned}
&\lambda^{max}_{2}(\lambda_{1})\\
&~=
\max\begin{pmatrix}
\underset{j\in\{1,2,\cdots,p-1\}}\max\Big|\sum\limits_{k=1}^{j}\{X^{\top}_{.k}\textbf{\textit{y}}-(X^{\top}X)_{k.}\textbf{1}_{p}\}-j\lambda_{1}\rm{sgn}(\xi)\Big|,\\
|X^{\top}_{.p}\textbf{\textit{y}}-\xi(X^{\top}X)_{p.}\textbf{1}_{p}-\lambda_{1}\rm{sgn}(\xi)|
\end{pmatrix}.
\end{aligned}
\end{equation}
On the other hand, there is a solution of (\ref{eq:1}) such that all of its elements are same if $\lambda_{2}\geq\lambda^{max}_{2}(\lambda_{1})$. Here, $\rm{sgn}(\xi)$ is 1 if $\xi>0$. Otherwise, $\rm{sgn}(\xi)$ is -1.
\end{Lemma}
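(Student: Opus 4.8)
The plan is to decide optimality of each candidate primal point by producing (or ruling out) a dual certificate through the KKT system (\ref{eq:KKT}): a vector $\boldsymbol{\beta}$ solves (\ref{eq:1}) if and only if there is a pair $(\textbf{\textit{u}},\textbf{\textit{v}})$ with $\textbf{\textit{u}}=\textbf{\textit{y}}-X\boldsymbol{\beta}$, with $\|\textbf{\textit{v}}\|_{\infty}\leq\lambda_{2}$ coming from $\textbf{\textit{v}}\in\lambda_{2}\partial\|\boldsymbol{\gamma}\|_{1}$ and $\boldsymbol{\gamma}=D\boldsymbol{\beta}$, and with $X^{\top}\textbf{\textit{u}}-D^{\top}\textbf{\textit{v}}\in\lambda_{1}\partial\|\boldsymbol{\beta}\|_{1}$. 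The one structural fact I will lean on throughout is the explicit column pattern of $D$ recorded in Lemma 3.1(a): the coordinates of $D^{\top}\textbf{\textit{v}}$ are $v_{1}$ in position $1$, $-v_{p-1}$ in position $p$, and $v_{j}-v_{j-1}$ in each interior position $j$. This is exactly why the two endpoint coordinates $j\in\{1,p\}$ and the interior coordinates $j\in\{2,\dots,p-1\}$ behave differently, and it is the source of the split between the $\lambda_{2},\kappa_{2}$ terms and the $2\lambda_{2},\kappa_{1}$ terms in $\lambda^{max}_{1}$.

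For part (i) the candidate is $\boldsymbol{\beta}=0$, which forces $\textbf{\textit{u}}=\textbf{\textit{y}}$ and $\boldsymbol{\gamma}=0$; the $\ell_{1}$ subdifferential (Example 2.1) then becomes the full box, so optimality of $0$ is equivalent to the existence of some $\textbf{\textit{v}}$ with $\|\textbf{\textit{v}}\|_{\infty}\leq\lambda_{2}$ and $\|X^{\top}\textbf{\textit{y}}-D^{\top}\textbf{\textit{v}}\|_{\infty}\leq\lambda_{1}$. For the sufficiency half I will use the triangle inequality coordinatewise, $|X_{.j}^{\top}\textbf{\textit{y}}-(D^{\top}\textbf{\textit{v}})_{j}|\leq|X_{.j}^{\top}\textbf{\textit{y}}|+|(D^{\top}\textbf{\textit{v}})_{j}|$, together with $|(D^{\top}\textbf{\textit{v}})_{j}|\leq\lambda_{2}$ on the endpoints and $|(D^{\top}\textbf{\textit{v}})_{j}|\leq 2\lambda_{2}$ on the interior; taking the maximum over $j$ turns $\lambda_{1}\geq\max\{2\lambda_{2}+\kappa_{1},\lambda_{2}+\kappa_{2}\}$ into feasibility of the certificate, so $0$ is a solution. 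For the other half I will read the same two groups of coordinate inequalities in the reverse direction to extract the corresponding lower bound on $\lambda_{1}$.

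For part (ii) the candidate is $\boldsymbol{\beta}=\xi\textbf{1}_{p}$ with $\xi\neq0$, so $\boldsymbol{\gamma}=D\boldsymbol{\beta}=0$ and every coordinate of $\partial\|\boldsymbol{\beta}\|_{1}$ collapses to $\mathrm{sgn}(\xi)$; the first KKT line thus becomes the linear \emph{equation} $D^{\top}\textbf{\textit{v}}=\textbf{\textit{b}}$ with $\textbf{\textit{b}}=X^{\top}\textbf{\textit{y}}-\xi X^{\top}X\textbf{1}_{p}-\lambda_{1}\mathrm{sgn}(\xi)\textbf{1}_{p}$. Since $D$ has full row rank, $D^{\top}$ is injective with range $\{\textbf{\textit{b}}:\textbf{1}_{p}^{\top}\textbf{\textit{b}}=0\}$, so I will first impose the consistency condition $\textbf{1}_{p}^{\top}\textbf{\textit{b}}=0$ to fix $\xi$, and then invert the bidiagonal system by telescoping: the interior equations give $v_{j}=\sum_{k=1}^{j}b_{k}$ for $j\leq p-1$, and the endpoint equation $-v_{p-1}=b_{p}$ supplies the second entry of $\lambda^{max}_{2}$. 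The remaining dual constraint $\|\textbf{\textit{v}}\|_{\infty}\leq\lambda_{2}$ on these partial sums is precisely $\lambda_{2}\geq\lambda^{max}_{2}(\lambda_{1})$, giving the forward implication, while for the converse I will verify that under $\lambda_{2}\geq\lambda^{max}_{2}(\lambda_{1})$ the triple $(\xi\textbf{1}_{p},\,\textbf{\textit{y}}-\xi X\textbf{1}_{p},\,\textbf{\textit{v}})$ satisfies every line of (\ref{eq:KKT}).

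The step I expect to be the main obstacle is controlling the free dual variable $\textbf{\textit{v}}$, whose appearance is the distinctive feature of the fused-Lasso dual noted after (\ref{eq:3}). In part (i) this freedom turns optimality of $0$ into a feasibility question — whether an admissible $\textbf{\textit{v}}$ exists at all — so the care lies in arguing that the coordinatewise triangle bounds can be met simultaneously by one $\textbf{\textit{v}}$, not merely term by term. In part (ii) the difficulty is the mirror image: $\textbf{\textit{v}}$ is pinned down to a single vector, so before bounding anything I must check that $D^{\top}\textbf{\textit{v}}=\textbf{\textit{b}}$ is solvable at all, which is exactly the place where $\xi$ has to be chosen so that $\textbf{1}_{p}^{\top}\textbf{\textit{b}}=0$; only then does the partial-sum formula for $\textbf{\textit{v}}$ and its $\ell_{\infty}$ bound make sense.
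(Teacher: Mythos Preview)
Your approach is essentially the same as the paper's: both parts start from the KKT system (\ref{eq:KKT}), in (i) reduce optimality of $0$ to the existence of an admissible $\textbf{\textit{v}}$ and bound $|X_{.j}^{\top}\textbf{\textit{y}}-(D^{\top}\textbf{\textit{v}})_{j}|$ coordinatewise via the endpoint/interior column pattern of $D$, and in (ii) solve the bidiagonal system $D^{\top}\textbf{\textit{v}}=\textbf{\textit{b}}$ by telescoping partial sums and read off $\|\textbf{\textit{v}}\|_{\infty}$. Your explicit attention to the consistency condition $\textbf{1}_{p}^{\top}\textbf{\textit{b}}=0$ that pins down $\xi$, and to the need for a single $\textbf{\textit{v}}$ to work for all coordinates in (i), are refinements the paper leaves implicit.
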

\begin{proof}
(\textit{i}) If $\boldsymbol{\beta}^{*}(\lambda_{1},\lambda_{2})=0$, according to the KKT system (\ref{eq:KKT}), we have the following results.
$D\boldsymbol{\beta}^{*}(\lambda_{1},\lambda_{2})=0, \textbf{\textit{y}}=\textbf{\textit{u}}^{*}(\lambda_{1},\lambda_{2})$,
 $X^{\top}\textbf{\textit{y}}-D^{\top}\textbf{\textit{v}}^{*}(\lambda_{1},\lambda_{2})\in\lambda_{1}\partial\|0\|_{1}$ and $\textbf{\textit{v}}^{*}(\lambda_{1},\lambda_{2})\in\lambda_{2}\partial\|0\|_{1}$. So, $\lambda_{1}$ and $\lambda_{2}$ need to satisfy that
 \begin{center}
 $\lambda_{1}\geq\underset{j}\max\big|X^{\top}_{.j}\textbf{\textit{y}}-D^{\top}_{.j}\textbf{\textit{v}}^{*}(\lambda_{1},\lambda_{2})\big|$ and $\lambda_{2}\geq\|\textbf{\textit{v}}^{*}(\lambda_{1},\lambda_{2})\|_{\infty}$.
 \end{center}
 Next, we compute
 \textcolor{black}{
 \begin{align*}
 &\underset{\textbf{\textit{v}}\in\mathbb{R}^{p}}
 \max\left\{|X^{\top}_{.j}\textbf{\textit{y}}-D^{\top}_{.j}\textbf{\textit{v}}|
 :\|\textbf{\textit{v}}\|_{\infty}\leq\lambda_{2}\right\}\\
 &=\max\left\{\underset{\textbf{\textit{v}}\in\mathbb{R}^{p}}\max \left\{ X^{\top}_{.j}\textbf{\textit{y}}-D^{\top}_{.j}\textbf{\textit{v}}
 :\|\textbf{\textit{v}}\|_{\infty}\leq\lambda_{2}\right\},
 \underset{\textbf{\textit{v}}\in\mathbb{R}^{p}}\max \left\{ -X^{\top}_{.j}\textbf{\textit{y}}+D^{\top}_{.j}\textbf{\textit{v}}:
 \|\textbf{\textit{v}}\|_{\infty}\leq\lambda_{2}\right\}\right\}.
 \end{align*}}
According to the elements of $D$ in (\ref{eq:2}), we know that
  \textcolor{black}{\begin{align*}
 &\underset{\textbf{\textit{v}}\in\mathbb{R}^{p}}\max \left\{ X^{\top}_{.j}\textbf{\textit{y}}-D^{\top}_{.j}\textbf{\textit{v}}
 :\|\textbf{\textit{v}}\|_{\infty}\leq\lambda_{2}\right\}
 =X^{\top}_{.j}\textbf{\textit{y}}
 +\underset{\textbf{\textit{v}}\in\mathbb{R}^{p}}\max \left\{-D^{\top}_{.j}\textbf{\textit{v}}:\|\textbf{\textit{v}}\|_{\infty}\leq\lambda_{2}\right\}\\
 &=\begin{cases}
 X^{\top}_{.j}\textbf{\textit{y}}+
 \max\left\{-v_{1}:|v_{1}|\leq\lambda_{2}\right\}, &j=1\\
 X^{\top}_{.j}\textbf{\textit{y}}
 +\max\left\{-v_{j}+v_{j-1}:|v_{j-1}|\leq\lambda_{2},|v_{j}|\leq\lambda_{2}\right\}, &\rm{otherwise},\\
 X^{\top}_{.p}\textbf{\textit{y}}+\max \left\{v_{p}:|v_{p}|\leq\lambda_{2}\right\}, &j=p
 \end{cases}\\
&=\begin{cases}
X^{\top}_{.j}\textbf{\textit{y}}+2\lambda_{2}, &j\in\{2,\cdots,p-1\}  \\
X^{\top}_{.j}\textbf{\textit{y}}+\lambda_{2}, &j\in\{1,p\} \end{cases}
 .
\end{align*}}
Similarly,
 \begin{align*}
&\textcolor{black}{\underset{\textbf{\textit{v}}\in\mathbb{R}^{p}}\max \left\{-X^{\top}_{.j}\textbf{\textit{y}}+D^{\top}_{.j}\textbf{\textit{v}}
:\|\textbf{\textit{v}}\|_{\infty}\leq\lambda_{2}\right\}}
=\begin{cases}
-X^{\top}_{.j}\textbf{\textit{y}}+2\lambda_{2}, &j\in\{2,\cdots,p-1\}  \\
-X^{\top}_{.j}\textbf{\textit{y}}+\lambda_{2}, &j\in\{1,p\} \end{cases}.
\end{align*}
So, we get that
\begin{align*}
 &\textcolor{black}{\underset{\textbf{\textit{v}}\in\mathbb{R}^{p}}
 \max\left\{|X^{\top}_{.j}\textbf{\textit{y}}-D^{\top}_{.j}\textbf{\textit{v}}|:
 \|\textbf{\textit{v}}\|_{\infty}\leq\lambda_{2}\right\}}
 =\begin{cases}
 |X_{.j}^{\top}\textbf{\textit{y}}|+2\lambda_{2},& j\in\{2,\cdots,p-1\}\\
|X_{.j}^{\top}\textbf{\textit{y}}|+\lambda_{2}, & j\in\{1,p\}
 \end{cases}.
\end{align*}
and
\begin{align*}
&\textcolor{black}{\underset{j}
\max\left\{|X^{\top}_{.j}\textbf{\textit{y}}-D^{\top}_{.j}\textbf{\textit{v}}^{*}(\lambda_{1},\lambda_{2})|\right\}
=\underset{j}\max\left\{|X^{\top}_{.j}\textbf{\textit{y}}-D^{\top}_{.j}\textbf{\textit{v}}:\|\textbf{\textit{v}}\|_{\infty}\leq\lambda_{2}\right\}}\\
&=\max\left\{|X^{\top}_{.j}\textbf{\textit{y}}|+2\lambda_{2}:j\in\{2,\cdots,p-1\},
|X^{\top}_{.j}\textbf{\textit{y}}|+\lambda_{2}:j\in\{1,p\}\right\}\\
&=\max\left\{2\lambda_{2}+\kappa_{1},\lambda_{2}+\kappa_{2}\right\}\\
&=\lambda^{max}_{1}(\lambda_{2}).
\end{align*}
Therefore, $\lambda_{1}\geq\lambda^{max}_{1}(\lambda_{2})$.

On the other hand, when $\lambda_{1}\geq\lambda^{\rm{max}}_{1}(\lambda_{2})$, the above results all hold  and 0 is a solution of (\ref{eq:1}).

(\textit{ii}) If $\boldsymbol{\beta}^{*}(\lambda_{1},\lambda_{2})=\xi\textbf{1}_{p}$ with $\xi$ being a nonzero constant, $D\boldsymbol{\beta}^{*}(\lambda_{1},\lambda_{2})=0$.
In this case, $\lambda_{2}$ should satisfy that
\begin{align*}
&\textbf{\textit{y}}-\xi X\textbf{1}_{p}-\textbf{\textit{u}}^{*}(\lambda_{1},\lambda_{2})=0,\\
& X^{\top}\textbf{\textit{u}}^{*}(\lambda_{1},\lambda_{2})-D^{\top}\textbf{\textit{v}}^{*}(\lambda_{1},\lambda_{2})=\lambda_{1}\rm{sgn}(\xi)
\textbf{1}_{\textit{p}}
\end{align*}
 and  $\lambda_{2}\geq\|\textbf{\textit{v}}^{*}(\lambda_{1},\lambda_{2})\|_{\infty}$. So, we know that
 \begin{center}
 $X^{\top}(\textbf{\textit{y}}-\xi X\textbf{1}_{p})-D^{\top}\textbf{\textit{v}}^{*}(\lambda_{1},\lambda_{2})=\lambda_{1}\rm{sgn}(\xi)\textbf{1}_{\textit{p}}$,
 \end{center}
which leads to $D^{\top}\textbf{\textit{v}}^{*}(\lambda_{1},\lambda_{2})=X^{\top}\textbf{\textit{y}}-\xi X^{\top}X\textbf{1}_{p}-\lambda_{1}\rm{sgn}(\xi)\textbf{1}_{\textit{p}}$, i.e.,
\begin{align*}
v^{*}_{1}(\lambda_{1},\lambda_{2})&=X_{.1}^{\top}\textbf{\textit{y}}-\xi (X^{\top}X)_{1.}\textbf{1}_{p}-\lambda_{1}\rm{sgn}(\xi)\\
v^{*}_{2}(\lambda_{1},\lambda_{2})-v^{*}_{1}(\lambda_{1},\lambda_{2})&=X_{.2}^{\top}\textbf{\textit{y}}-\xi (X^{\top}X)_{2.}\textbf{1}_{\textit{p}}-\lambda_{1}\rm{sgn}(\xi)\\
&\vdots\\
v^{*}_{p-1}(\lambda_{1},\lambda_{2})-v^{*}_{p-2}(\lambda_{1},\lambda_{2})&=X_{.p-1}^{\top}\textbf{\textit{y}}-\xi (X^{\top}X)_{p-1.}\textbf{1}_{\textit{p}}-\lambda_{1}\rm{sgn}(\xi)\\
-v^{*}_{p}(\lambda_{1},\lambda_{2})&=X_{.p}^{\top}\textbf{\textit{y}}-\xi (X^{\top}X)_{p.}\textbf{1}_{\textit{p}}-\lambda_{1}\rm{sgn}(\xi).
\end{align*}
The closed-form expression of $v^{*}_{j}(\lambda_{1},\lambda_{2})$ is
\begin{align*}
&v^{*}_{j}(\lambda_{1},\lambda_{2})\\
&=
\begin{cases}
\sum\limits_{k=1}^{j}\{X_{.k}^{\top}\textbf{\textit{y}}-\xi (X^{\top}X)_{k.}\textbf{1}_{p}\}-j\lambda_{1}\rm{sgn}(\xi), &j\in\{1,2,\cdots,p-1\}\\
-X_{.p}^{\top}\textbf{\textit{y}}+\xi (X^{\top}X)_{p.}\textbf{1}_{p}+\lambda_{1}\rm{sgn}(\xi), &j=p
\end{cases}.
\end{align*}
In this case,
\begin{align*}
&\|\textbf{\textit{v}}^{*}(\lambda_{1},\lambda_{2})\|_{\infty}\\
&=
 \max\begin{pmatrix}
\underset{j\in\{1,2,\cdots,p-1\}}\max\Big|\sum\limits_{k=1}^{j}\{X^{\top}_{.k}\textbf{\textit{y}}-\xi(X^{\top}X)_{k.}\textbf{1}_{p}\}-j\lambda_{1}\rm{sgn}(\xi)\Big|,\\
|-X^{\top}_{.p}\textbf{\textit{y}}+\xi(X^{\top}X)_{p.}\textbf{1}_{p}+\lambda_{1}\rm{sgn}(\xi)|
\end{pmatrix}\\
&=\lambda^{max}_{2}(\lambda_{1}).
 \end{align*}
From these results, we know that $\lambda_{2}\geq\lambda^{max}_{2}(\lambda_{1})$.

On the other hand, when $\lambda_{2}\geq\lambda^{max}_{2}(\lambda_{1})$, the above results all hold  and there is a solution of (\ref{eq:1}) such that all of its elements are same.
\end{proof}
\begin{Remark}
Lemma 3.2 presents the lower bounds of $\lambda_{1}$ and $\lambda_{2}$ under the condition that one of them is fixed. As we all know, fused Lasso reduces to Lasso when $\lambda_{2}=0$, which is
\begin{center}
$\underset{\boldsymbol{\beta}\in \mathbb{R}^{p}}\min {\frac{1}{2}\|\textbf{\textit{y}}-X\boldsymbol{\beta}\|^{2}_{2}+\lambda\|\boldsymbol{\beta}\|_{1}}.$
\end{center}
The lower bound of $\lambda$ such that $\boldsymbol{\beta}^{*}(\lambda)=0$ is $\|X^{\top}\textbf{\textit{y}}\|_{\infty}$, which is calculated and used in many literatures, such as  Ghaoui et al. \cite{el2010safe}, Tibshirani et al. \cite{tibshirani2012strong}, Wang et al. \cite{wang2015fused} and Ndiaye et al. \cite{ndiaye2017gap}.  When $\lambda_{2}=0$, $\lambda^{max}_{1}(\lambda_{2})=\|X^{\top}\textbf{\textit{y}}\|_{\infty}$ and it is same with the lower bound of $\lambda$.  Similarly, the
fused Lasso reduces to the following model when  $\lambda_{1}=0$, which is
\begin{center}
$\underset{\boldsymbol{\beta}\in \mathbb{R}^{p}}\min {\frac{1}{2}\|\textbf{\textit{y}}-X\boldsymbol{\beta}\|^{2}_{2}+\lambda\|D\boldsymbol{\beta}\|_{1}}.$
\end{center}
 In Wang et al. \cite{wang2015fused}, they showed that the lower bound of $\lambda$ such that $\boldsymbol{\beta}^{*}(\lambda)=\xi\textbf{1}_{p}$ is the value of $\lambda^{max}_{2}(\lambda_{1})$ in Lemma 3.2 when $\lambda_{1}=0$.
\end{Remark}

From this lemma, we only need to estimate the solution of fused Lasso when $\lambda_{1}\in(0,\lambda^{max}_{1}(\lambda_{2}))$ with fixed $\lambda_{2}$ or $\lambda_{2}\in(0,\lambda^{max}_{2}(\lambda_{1}))$ with fixed $\lambda_{1}$. In the following part, we fix the value of $\lambda_{2}$. In the next lemma, we estimate the dual solution $\textbf{\textit{u}}^{*}(\lambda_{1},\lambda_{2})$ with the help of the variational inequality.
\begin{Lemma}
For any fixed $\lambda_{2}>0$ and $\tilde{\lambda}_{1}\in(0,\lambda^{max}_{1}(\lambda_{2})]$\textcolor{black}{, suppose } $\textbf{\textit{u}}^{*}(\tilde{\lambda}_{1},\lambda_{2})$ is known and $\lambda_{1}\in(0,\tilde{\lambda}_{1})$. The dual solution satisfies that
\begin{center}
$\textbf{\textit{u}}^{*}(\lambda_{1},\lambda_{2})\in\Omega:=\{\textbf{\textit{u}}:\|\textbf{\textit{u}}-\textbf{c}\|_{2}\leq r\}$
\end{center}
with $\textbf{c}=\frac{1}{2}\left[(1+\frac{\lambda_{1}}{\tilde{\lambda}_{1}})\textbf{\textit{y}}-
\frac{\lambda_{1}}{\tilde{\lambda}_{1}}\textbf{\textit{u}}^{*}(\tilde{\lambda}_{1},\lambda_{2})\right]$ and $r=\|\textbf{c}\|_{2}$.
\end{Lemma}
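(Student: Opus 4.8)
The plan is to recast the dual problem (\ref{eq:3}) as a projection and then exploit the associated variational inequality at two cleverly chosen feasible points. Writing $g(\textbf{\textit{u}})=\frac12\|\textbf{\textit{u}}\|_2^2-\langle\textbf{\textit{u}},\textbf{\textit{y}}\rangle=\frac12\|\textbf{\textit{u}}-\textbf{\textit{y}}\|_2^2-\frac12\|\textbf{\textit{y}}\|_2^2$, the $\textbf{\textit{u}}$-part of the dual optimum is the Euclidean projection of $\textbf{\textit{y}}$ onto the convex set $\mathcal{F}(\lambda_1)=\{\textbf{\textit{u}}:\exists\,\textbf{\textit{v}},\ \|\textbf{\textit{v}}\|_\infty\le\lambda_2,\ \|X^\top\textbf{\textit{u}}-D^\top\textbf{\textit{v}}\|_\infty\le\lambda_1\}$. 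Since the objective ignores $\textbf{\textit{v}}$, the first-order optimality condition reduces to $\langle\textbf{\textit{y}}-\textbf{\textit{u}}^*(\lambda_1,\lambda_2),\ \textbf{\textit{u}}-\textbf{\textit{u}}^*(\lambda_1,\lambda_2)\rangle\le 0$ for every $\textbf{\textit{u}}\in\mathcal{F}(\lambda_1)$, and likewise at $\tilde\lambda_1$. I will abbreviate $\textbf{\textit{u}}^*=\textbf{\textit{u}}^*(\lambda_1,\lambda_2)$, $\tilde{\textbf{\textit{u}}}=\textbf{\textit{u}}^*(\tilde\lambda_1,\lambda_2)$ and $t=\lambda_1/\tilde\lambda_1\in(0,1)$.

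The first thing to record is that the prescribed set $\Omega$ is exactly the closed ball whose diameter has endpoints $\textbf{0}$ and $2\textbf{c}=(1+t)\textbf{\textit{y}}-t\tilde{\textbf{\textit{u}}}$: indeed $\|\textbf{\textit{u}}-\textbf{c}\|_2\le\|\textbf{c}\|_2$ is equivalent to $\langle\textbf{\textit{u}},\textbf{\textit{u}}-2\textbf{c}\rangle\le0$. Hence the goal collapses to the single scalar inequality $\langle\textbf{\textit{u}}^*,\textbf{\textit{u}}^*-(1+t)\textbf{\textit{y}}+t\tilde{\textbf{\textit{u}}}\rangle\le0$, and this reformulation of $\Omega$ as a Thales-type ball is what makes the route visible.

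To feed the variational inequalities I will use two feasible test points. The point $\textbf{0}$ is feasible for every $\lambda_1$ (take $\textbf{\textit{v}}=\textbf{0}$), giving $\|\textbf{\textit{u}}^*\|_2^2\le\langle\textbf{\textit{u}}^*,\textbf{\textit{y}}\rangle$ and, at $\tilde\lambda_1$, $\|\tilde{\textbf{\textit{u}}}\|_2^2\le\langle\tilde{\textbf{\textit{u}}},\textbf{\textit{y}}\rangle$; in particular both inner products are nonnegative. The crucial test point is $t\tilde{\textbf{\textit{u}}}$: if $(\tilde{\textbf{\textit{u}}},\tilde{\textbf{\textit{v}}})$ is dual-feasible at $\tilde\lambda_1$, then $(t\tilde{\textbf{\textit{u}}},t\tilde{\textbf{\textit{v}}})$ is dual-feasible at $\lambda_1$, since $\|X^\top(t\tilde{\textbf{\textit{u}}})-D^\top(t\tilde{\textbf{\textit{v}}})\|_\infty=t\,\|X^\top\tilde{\textbf{\textit{u}}}-D^\top\tilde{\textbf{\textit{v}}}\|_\infty\le t\tilde\lambda_1=\lambda_1$ and $\|t\tilde{\textbf{\textit{v}}}\|_\infty=t\|\tilde{\textbf{\textit{v}}}\|_\infty\le\lambda_2$, the last step using $t\le1$. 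Thus $t\tilde{\textbf{\textit{u}}}\in\mathcal{F}(\lambda_1)$, and the variational inequality there yields $\|\textbf{\textit{u}}^*\|_2^2-\langle\textbf{\textit{u}}^*,\textbf{\textit{y}}\rangle-t\langle\textbf{\textit{u}}^*,\tilde{\textbf{\textit{u}}}\rangle+t\langle\textbf{\textit{y}},\tilde{\textbf{\textit{u}}}\rangle\le0$.

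Finally I assemble the pieces. A direct expansion shows that the target quantity equals this last inequality plus the remainder $t\big(2\langle\textbf{\textit{u}}^*,\tilde{\textbf{\textit{u}}}\rangle-\langle\textbf{\textit{u}}^*,\textbf{\textit{y}}\rangle-\langle\tilde{\textbf{\textit{u}}},\textbf{\textit{y}}\rangle\big)$, so it suffices to prove $2\langle\textbf{\textit{u}}^*,\tilde{\textbf{\textit{u}}}\rangle\le\langle\textbf{\textit{u}}^*,\textbf{\textit{y}}\rangle+\langle\tilde{\textbf{\textit{u}}},\textbf{\textit{y}}\rangle$. Here the two origin inequalities re-enter: by Cauchy--Schwarz $\langle\textbf{\textit{u}}^*,\tilde{\textbf{\textit{u}}}\rangle\le\|\textbf{\textit{u}}^*\|_2\|\tilde{\textbf{\textit{u}}}\|_2\le\sqrt{\langle\textbf{\textit{u}}^*,\textbf{\textit{y}}\rangle\,\langle\tilde{\textbf{\textit{u}}},\textbf{\textit{y}}\rangle}$, and the AM--GM bound $2\sqrt{\langle\textbf{\textit{u}}^*,\textbf{\textit{y}}\rangle\,\langle\tilde{\textbf{\textit{u}}},\textbf{\textit{y}}\rangle}\le\langle\textbf{\textit{u}}^*,\textbf{\textit{y}}\rangle+\langle\tilde{\textbf{\textit{u}}},\textbf{\textit{y}}\rangle$ closes the estimate, so the target is a sum of two nonpositive terms and $\textbf{\textit{u}}^*\in\Omega$ follows. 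I expect the main obstacle to be the bookkeeping of the third paragraph: identifying the correct feasible test point $t\tilde{\textbf{\textit{u}}}$, whose feasibility genuinely depends on $t\le1$ preserving the fixed $\lambda_2$-constraint, and recognizing that the leftover cross term is controlled exactly by Cauchy--Schwarz followed by AM--GM rather than by any single variational inequality.
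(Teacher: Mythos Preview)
Your argument is correct. The Thales-ball reformulation --- that $\|\textbf{\textit{u}}-\textbf{c}\|_2\le\|\textbf{c}\|_2$ is equivalent to $\langle\textbf{\textit{u}}^*,\textbf{\textit{u}}^*-(1+t)\textbf{\textit{y}}+t\tilde{\textbf{\textit{u}}}\rangle\le0$ --- is a clean way to see the target, and your decomposition of that quantity into the $t\tilde{\textbf{\textit{u}}}$-variational inequality plus the remainder $t\big(2\langle\textbf{\textit{u}}^*,\tilde{\textbf{\textit{u}}}\rangle-\langle\textbf{\textit{u}}^*,\textbf{\textit{y}}\rangle-\langle\tilde{\textbf{\textit{u}}},\textbf{\textit{y}}\rangle\big)$, handled by the two origin-VIs via Cauchy--Schwarz and AM--GM, is sound.

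The route is genuinely different from the paper's. The paper works with the augmented variable $\boldsymbol{\theta}=(\textbf{\textit{u}},\textbf{\textit{v}})$ and writes two cross variational inequalities: (\ref{eq:7}) plugs $\boldsymbol{\theta}^*(\lambda_1,\lambda_2)$ as a test point into the $\tilde\lambda_1$-VI (using $\mathcal{F}(\lambda_1)\subset\mathcal{F}(\tilde\lambda_1)$), and (\ref{eq:8}) plugs $t\boldsymbol{\theta}^*(\tilde\lambda_1,\lambda_2)$ into the $\lambda_1$-VI, as you do. It then asserts that the linear combination $t\cdot(\ref{eq:7})+(\ref{eq:8})$ collapses directly to the target inequality. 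You never invoke the first cross-inequality at all; instead you substitute two origin tests and the nonlinear Cauchy--Schwarz/AM--GM step. The paper's purely linear combination would be more economical if it went through, but if one expands $t\cdot(\ref{eq:7})+(\ref{eq:8})$ carefully a term $-t\|\tilde{\textbf{\textit{u}}}\|_2^2$ survives that is absent from the claimed intermediate display, and in fact no nonnegative linear combination of (\ref{eq:7}) and (\ref{eq:8}) alone yields the stated ball --- so your extra ingredients are doing real work, not merely giving an alternative packaging.
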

\begin{proof}
For ease of expression, we denote
\begin{equation*}
\boldsymbol{\theta}=\begin{pmatrix}
\textbf{\textit{u}}\\
\textbf{\textit{v}}
\end{pmatrix}\in \mathbb{R}^{n+p-1},
\tilde{\textbf{\textit{y}}}=\begin{pmatrix}
\textbf{\textit{y}}\\
0
\end{pmatrix}\in \mathbb{R}^{n+p-1},
A=\begin{pmatrix}I_{n},0\\
0,0
\end{pmatrix}\in \mathbb{R}^{(n+p-1)\times(n+p-1)},
\end{equation*}
\begin{equation*}
B=\begin{pmatrix}X^{\top},-D^{\top}
\end{pmatrix}\in \mathbb{R}^{p\times(n+p-1)},
C=\begin{pmatrix}0,I_{p-1}
\end{pmatrix}\in \mathbb{R}^{(p-1)\times(n+p-1)}.
\end{equation*}
Then, the dual problem (\ref{eq:3}) is equivalent to
\begin{equation*}
\begin{split} \underset{\boldsymbol{\theta}\in \mathbb{R}^{n+p-1}}\min  &\frac{1}{2}\boldsymbol{\theta}^{\top}A\boldsymbol{\theta}-\langle \boldsymbol{\theta},\tilde{\textbf{\textit{y}}}\rangle\\
 s.t. \quad &\|B\boldsymbol{\theta}\|_{\infty}\leq \lambda_{1}, \|C\boldsymbol{\theta}\|_{\infty}\leq\lambda_{2}.
\end{split}
\end{equation*}
Denote $G(\boldsymbol{\theta})=\frac{1}{2}\boldsymbol{\theta}^{\top}A\boldsymbol{\theta}-\langle \boldsymbol{\theta},\tilde{\textbf{\textit{y}}}\rangle$. Then $\nabla G(\boldsymbol{\theta})=A\boldsymbol{\theta}-\tilde{\textbf{\textit{y}}}$. According to the variational inequality, we know that
\begin{center}
$\left\langle\nabla G(\boldsymbol{\theta}^{*}(\lambda_{1},\lambda_{2})),\boldsymbol{\theta}-\boldsymbol{\theta}^{*}(\lambda_{1},\lambda_{2})\right\rangle\geq0$
\end{center}
holds for any $\boldsymbol{\theta}\in\mathcal{F}=\{\boldsymbol{\theta}|\|B\boldsymbol{\theta}\|_{\infty}\leq \lambda_{1},\|C\boldsymbol{\theta}\|_{\infty}\leq\lambda_{2}\}$. Hence, we have
\begin{equation}\label{eq:7}
\left\langle A\boldsymbol{\theta}^{*}\left(\tilde{\lambda}_{1},\lambda_{2}\right)-\tilde{\textbf{\textit{y}}},\boldsymbol{\theta}^{*}(\lambda_{1},\lambda_{2}) -\boldsymbol{\theta}^{*}\left(\tilde{\lambda}_{1},\lambda_{2}\right)\right\rangle\geq0
\end{equation}
and
\begin{equation}\label{eq:8}
\left\langle A\boldsymbol{\theta}^{*}(\lambda_{1},\lambda_{2})-\tilde{\textbf{\textit{y}}},
\frac{\lambda_{1}}{\tilde{\lambda}_{1}}\boldsymbol{\theta}^{*}\left(\tilde{\lambda}_{1},\lambda_{2}\right) -\boldsymbol{\theta}^{*}(\lambda_{1},\lambda_{2})\right\rangle\geq0.
\end{equation}
By adding (\ref{eq:8}) with (\ref{eq:7}) multiplying with $\frac{\lambda_{1}}{\tilde{\lambda}_{1}}$, we get
$$\left\langle\frac{\lambda_{1}}{\tilde{\lambda}_{1}}A\boldsymbol{\theta}^{*}(\tilde{\lambda}_{1},\lambda_{2})
+A\boldsymbol{\theta}^{*}(\lambda_{1},\lambda_{2})-(1+\frac{\lambda_{1}}{\tilde{\lambda}_{1}})\tilde{\textbf{\textit{y}}},
(\frac{\lambda_{1}}{\tilde{\lambda}_{1}}-1)\boldsymbol{\theta}^{*}(\lambda_{1},\lambda_{2})\right\rangle\geq0,$$
which leads to
\begin{align*}
&\left\langle\frac{\tilde{\lambda}_{1}}{\lambda_{2}}A\boldsymbol{\theta}^{*}\left(\tilde{\lambda}_{1},\lambda_{2}\right), \left(\frac{\lambda_{1}}{\tilde{\lambda}_{1}}-1\right)\boldsymbol{\theta}^{*}(\lambda_{1},\lambda_{2})\right\rangle\\
&\geq\left\langle A\boldsymbol{\theta}^{*}(\lambda_{1},\lambda_{2}),
\left(1-\frac{\lambda_{1}}{\tilde{\lambda}_{1}}\right)\boldsymbol{\theta}^{*}(\lambda_{1},\lambda_{2})\right\rangle\\
&\quad +\left(1+\frac{\lambda_{1}}{\tilde{\lambda}_{1}}\right)\left(\frac{\lambda_{1}}{\tilde{\lambda}_{1}}-1\right)
\left\langle \tilde{\textbf{\textit{y}}}, \boldsymbol{\theta}^{*}(\lambda_{1},\lambda_{2})\right\rangle
\end{align*}
and
\begin{align*}
&\left(\frac{\lambda_{1}}{\tilde{\lambda}_{1}}-1\right)\frac{\lambda_{1}}{\tilde{\lambda}_{1}}\left\langle \boldsymbol{\theta}^{*}\left(\lambda_{1},\tilde{\lambda}_{2}\right), A\boldsymbol{\theta}^{*}(\lambda_{1},\lambda_{2})\right\rangle\\
&\geq \left(1-\frac{\lambda_{1}}{\tilde{\lambda}_{1}}\right)\left\langle A\boldsymbol{\theta}^{*}(\lambda_{1},\lambda_{2}),\boldsymbol{\theta}^{*}(\lambda_{1},\lambda_{2})\right\rangle\\
&\quad +\left(1+\frac{\lambda_{1}}{\tilde{\lambda}_{1}}\right)\left(\frac{\lambda_{1}}{\tilde{\lambda}_{1}}-1\right)
\left\langle \tilde{\textbf{\textit{y}}}, \boldsymbol{\theta}^{*}(\lambda_{1},\lambda_{2})\right\rangle.
\end{align*}
Dividing the last inequality by $1-\frac{\lambda_{1}}{\tilde{\lambda}_{1}}$, we get
\begin{equation}\label{eq:9}
\begin{aligned}
&-\frac{\lambda_{1}}{\tilde{\lambda}_{1}}\left\langle \boldsymbol{\theta}^{*}\left(\tilde{\lambda}_{1},\lambda_{2}\right), A\boldsymbol{\theta}^{*}(\lambda_{1},\lambda_{2})\right\rangle\\
&~~\geq \left\langle A\boldsymbol{\theta}^{*}(\lambda_{1},\lambda_{2}),\boldsymbol{\theta}^{*}(\lambda_{1},\lambda_{2})\right\rangle-\left(1+\frac{\lambda_{1}}{\tilde{\lambda}_{1}}\right)\left\langle \tilde{\textbf{\textit{y}}}, \boldsymbol{\theta}^{*}(\lambda_{1},\lambda_{2})\right\rangle.
\end{aligned}
\end{equation}
Because of
\begin{equation*}
A^{\top}\boldsymbol{\theta}^{*}(\lambda_{1},\lambda_{2})
=\begin{pmatrix}
\textbf{\textit{u}}^{*}(\lambda_{1},\lambda_{2})\\
0
\end{pmatrix}
\end{equation*}
and $\langle \tilde{\textbf{\textit{y}}}, \boldsymbol{\theta}^{*}(\lambda_{1},\lambda_{2})\rangle=\langle\textbf{\textit{y}},\textbf{\textit{u}}^{*}(\lambda_{1},\lambda_{2})\rangle$, the equation (\ref{eq:9}) can be written as
\begin{align*}
&-\frac{\lambda_{1}}{\tilde{\lambda}_{1}}\left\langle \textbf{\textit{u}}^{*}\left(\tilde{\lambda}_{1},\lambda_{2}\right), \textbf{\textit{u}}^{*}(\lambda_{1},\lambda_{2})\right\rangle\\
&\quad \geq\|\textbf{\textit{u}}^{*}(\lambda_{1},\lambda_{2})\|^{2}_{2}-
\left(1+\frac{\lambda_{1}}{\tilde{\lambda}_{1}}\right)\left\langle \textbf{\textit{y}},\textbf{\textit{u}}^{*}(\lambda_{1},\lambda_{2})\right\rangle,
\end{align*}
which means
\begin{align*}
&\|\textbf{\textit{u}}^{*}(\lambda_{1},\lambda_{2})\|^{2}_{2}-\left(1+\frac{\lambda_{1}}{\tilde{\lambda}_{1}}\right)\langle \textbf{\textit{y}},\textbf{\textit{u}}^{*}(\lambda_{1},\lambda_{2})\rangle\\
&\quad\quad +\frac{\lambda_{1}}{\tilde{\lambda}_{1}}\left\langle \textbf{\textit{u}}^{*}(\lambda_{1},\lambda_{2}), \textbf{\textit{u}}^{*}\left(\tilde{\lambda}_{1},\lambda_{2}\right)\right\rangle\leq0.
\end{align*}
Therefore,
\begin{align*}
&\left\|\textbf{\textit{u}}^{*}(\lambda_{1},\lambda_{2})-\frac{1}{2}\left(1+\frac{\lambda_{1}}{\tilde{\lambda}_{1}}\right)
\textbf{\textit{y}}+
\frac{1}{2}\frac{\lambda_{1}}{\tilde{\lambda}_{1}}\textbf{\textit{u}}^{*}\left(\tilde{\lambda}_{1},\lambda_{2}\right)\right\|^{2}_{2}\\
&\quad \leq \frac{1}{4}\left(1+\frac{\lambda_{1}}{\tilde{\lambda}_{1}}\right)^{2}\|\textbf{\textit{y}}\|^{2}_{2}
+\frac{1}{4}\left(\frac{\lambda_{1}}{\tilde{\lambda}_{1}}\right)^{2}\left\|\textbf{\textit{u}}^{*}(\tilde{\lambda}_{1},\lambda_{2})\right\|^{2}_{2}\\
&\quad\quad\quad\quad\quad\quad -\frac{1}{2}\left(1+\frac{\lambda_{1}}{\tilde{\lambda}_{1}}\right)\frac{\lambda_{1}}{\tilde{\lambda}_{1}}
\left\langle \textbf{\textit{y}},\textbf{\textit{u}}^{*}\left(\tilde{\lambda}_{1},\lambda_{2}\right)\right\rangle,
\end{align*}
which leads to $\|\textbf{\textit{u}}^{*}(\lambda_{1},\lambda_{2})-\textbf{\textit{c}}\|^{2}_{2}\leq r^{2}$ and the result is proved.
\end{proof}
\begin{Corollary}
In a special case that $\tilde{\lambda}_{1}=\lambda^{\max}_{1}(\lambda_{2})$,
\begin{center}
$\textbf{\textit{u}}^{*}\left(\tilde{\lambda}_{1},\lambda_{2}\right)
=\textbf{\textit{u}}^{*}\left(\lambda^{\max}_{1}(\lambda_{2}),\lambda_{2}\right)=\textbf{\textit{y}}$.
\end{center}
Here, $\textbf{c}=\frac{1}{2}\textbf{\textit{y}}$ and $\Omega=\{\textbf{\textit{u}}: \|\textbf{\textit{u}}-\frac{1}{2}\textbf{\textit{y}}\|_{2}\leq\frac{1}{2}\|\textbf{\textit{y}}\|_{2}\}$.
\end{Corollary}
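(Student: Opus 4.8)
The plan is to read the corollary off as the boundary case of the preceding lemma, so that the only genuine content is to pin down the value of $\textbf{\textit{u}}^{*}(\lambda^{\max}_{1}(\lambda_{2}),\lambda_{2})$ and then watch the formulas for $\textbf{c}$ and $r$ collapse. First I would invoke the lower-bound lemma (Lemma 3.2): at the threshold $\lambda_{1}=\lambda^{\max}_{1}(\lambda_{2})$ the all-zero vector is a solution of (\ref{eq:1}), i.e. $\boldsymbol{\beta}^{*}(\lambda^{\max}_{1}(\lambda_{2}),\lambda_{2})=0$. Feeding this into the primal feasibility line of the KKT system (\ref{eq:KKT}), namely $\textbf{\textit{y}}-X\boldsymbol{\beta}^{*}-\textbf{\textit{u}}^{*}=0$, and using $X\boldsymbol{\beta}^{*}=0$, forces $\textbf{\textit{u}}^{*}(\lambda^{\max}_{1}(\lambda_{2}),\lambda_{2})=\textbf{\textit{y}}$. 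This is exactly the first displayed equality in the statement, with $\tilde{\lambda}_{1}=\lambda^{\max}_{1}(\lambda_{2})$.

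Second, I would substitute $\textbf{\textit{u}}^{*}(\tilde{\lambda}_{1},\lambda_{2})=\textbf{\textit{y}}$ into the center and radius supplied by the preceding lemma. Since $\textbf{c}=\frac{1}{2}[(1+\frac{\lambda_{1}}{\tilde{\lambda}_{1}})\textbf{\textit{y}}-\frac{\lambda_{1}}{\tilde{\lambda}_{1}}\textbf{\textit{u}}^{*}(\tilde{\lambda}_{1},\lambda_{2})]$, the two $\frac{\lambda_{1}}{\tilde{\lambda}_{1}}$-terms cancel once $\textbf{\textit{u}}^{*}(\tilde{\lambda}_{1},\lambda_{2})$ is replaced by $\textbf{\textit{y}}$, leaving $\textbf{c}=\frac{1}{2}\textbf{\textit{y}}$ independently of $\lambda_{1}$; correspondingly $r=\|\textbf{c}\|_{2}=\frac{1}{2}\|\textbf{\textit{y}}\|_{2}$, which yields the claimed ball $\Omega=\{\textbf{\textit{u}}:\|\textbf{\textit{u}}-\frac{1}{2}\textbf{\textit{y}}\|_{2}\leq\frac{1}{2}\|\textbf{\textit{y}}\|_{2}\}$.

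There is essentially no obstacle here: the result is a direct specialization and the computation reduces to a one-line cancellation. The only point worth checking carefully is the logical direction of the lower-bound lemma. I need its ``on the other hand'' half, which guarantees that $0$ actually \emph{is} a solution at $\lambda_{1}=\lambda^{\max}_{1}(\lambda_{2})$ (not merely that $\lambda_{1}\geq\lambda^{\max}_{1}(\lambda_{2})$ is necessary), so that the KKT feasibility equation may legitimately be used to extract $\textbf{\textit{u}}^{*}=\textbf{\textit{y}}$. Once that is in hand, everything else is substitution into the expressions for $\textbf{c}$ and $r$ established in the preceding lemma.
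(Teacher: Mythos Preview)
Your proposal is correct and follows exactly the route the paper intends: the corollary is stated in the paper without proof, as an immediate specialization of Lemma~3.4, and your two steps (use the ``on the other hand'' direction of the lower-bound lemma to get $\boldsymbol{\beta}^{*}=0$, then the KKT relation $\textbf{\textit{y}}-X\boldsymbol{\beta}^{*}-\textbf{\textit{u}}^{*}=0$ to get $\textbf{\textit{u}}^{*}=\textbf{\textit{y}}$, then substitute) are precisely the intended justification. Indeed, the identity $\textbf{\textit{u}}^{*}=\textbf{\textit{y}}$ when $\boldsymbol{\beta}^{*}=0$ already appears verbatim inside the proof of the lower-bound lemma, so there is nothing new to supply beyond the one-line cancellation you describe.
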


\subsection{Safe feature identification rule for fused Lasso}
According to the duality theory in Section 3.1 and the estimation of the dual solution $\textbf{\textit{u}}^{*}(\lambda_{1},\lambda_{2})$ in  Section 3.2, we now present the safe feature identification rule for fused Lasso. By Cauchy inequality, the estimation in Lemma 3.4 implies that
$$|X_{.j}^{\top}\textbf{\textit{u}}^{*}(\lambda_{1},\lambda_{2})|\leq |X^{\top}_{.j}\textbf{c}|+r\|X_{.j}\|_{2}.$$
\begin{Theorem}
For any fixed $\lambda_{2}>0$ and $\tilde{\lambda}_{1}\in(0,\lambda^{max}_{1}(\lambda_{2})]$, \textcolor{black}{let} $\lambda_{1}\in(0,\tilde{\lambda}_{1})$. Then $\beta^{*}_{j}(\lambda_{1},\lambda_{2})=0$ if
\begin{equation*}
\begin{cases}
|X^{\top}_{.j}\textbf{c}|+r\|X_{.j}\|_{2}<\lambda_{1}-\lambda_{2},&~~ j\in\{1, p\},\\
|X^{\top}_{.j}\textbf{c}|+r\|X_{.j}\|_{2}<\lambda_{1}-2\lambda_{2},&~~j\in\{2, \cdots, p-1\}.
\end{cases}
\end{equation*}
\end{Theorem}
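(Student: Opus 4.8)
The plan is to combine the exact inactive-feature criterion recorded in Lemma 3.2(i) with the two pieces of information we already have about the unknown dual optimum: the ball estimate $\textbf{\textit{u}}^{*}(\lambda_{1},\lambda_{2})\in\Omega=\{\textbf{\textit{u}}:\|\textbf{\textit{u}}-\textbf{c}\|_{2}\leq r\}$ from Lemma 3.4, and the box feasibility $\|\textbf{\textit{v}}^{*}(\lambda_{1},\lambda_{2})\|_{\infty}\leq\lambda_{2}$ read directly off the dual problem (\ref{eq:3}). Lemma 3.2(i) tells us that $\beta^{*}_{j}(\lambda_{1},\lambda_{2})=0$ as soon as $|X_{.j}^{\top}\textbf{\textit{u}}^{*}(\lambda_{1},\lambda_{2})-D_{.j}^{\top}\textbf{\textit{v}}^{*}(\lambda_{1},\lambda_{2})|<\lambda_{1}$, so the entire task is to replace this quantity, which depends on the unknown $(\textbf{\textit{u}}^{*},\textbf{\textit{v}}^{*})$, by a computable upper bound valid for every feasible pair, and then to determine when that bound stays strictly below $\lambda_{1}$.

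First I would split the criterion by the triangle inequality, $|X_{.j}^{\top}\textbf{\textit{u}}^{*}-D_{.j}^{\top}\textbf{\textit{v}}^{*}|\leq|X_{.j}^{\top}\textbf{\textit{u}}^{*}|+|D_{.j}^{\top}\textbf{\textit{v}}^{*}|$, so that the two dual variables can be bounded separately. The first term is already handled by the Cauchy-inequality estimate stated just before the theorem, namely $|X_{.j}^{\top}\textbf{\textit{u}}^{*}(\lambda_{1},\lambda_{2})|\leq|X_{.j}^{\top}\textbf{c}|+r\|X_{.j}\|_{2}$, which is precisely the expression appearing on the left-hand side of both hypotheses.

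Next I would bound the second term $|D_{.j}^{\top}\textbf{\textit{v}}^{*}|$ by reading the columns of $D$ off its definition in (\ref{eq:2}). The boundary columns are sparse with a single nonzero entry, giving $D_{.1}^{\top}\textbf{\textit{v}}^{*}=v_{1}^{*}$ and $D_{.p}^{\top}\textbf{\textit{v}}^{*}=-v_{p-1}^{*}$, whereas every interior column $2\leq j\leq p-1$ has two nonzero entries, giving $D_{.j}^{\top}\textbf{\textit{v}}^{*}=v_{j}^{*}-v_{j-1}^{*}$. Feeding in $\|\textbf{\textit{v}}^{*}\|_{\infty}\leq\lambda_{2}$ then yields $|D_{.j}^{\top}\textbf{\textit{v}}^{*}|\leq\lambda_{2}$ for $j\in\{1,p\}$ and $|D_{.j}^{\top}\textbf{\textit{v}}^{*}|\leq|v_{j}^{*}|+|v_{j-1}^{*}|\leq2\lambda_{2}$ for interior $j$. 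This endpoint-versus-interior dichotomy is exactly what produces the two different right-hand sides $\lambda_{1}-\lambda_{2}$ and $\lambda_{1}-2\lambda_{2}$ in the statement.

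Finally I would assemble the pieces: adding the two bounds, the hypothesis for $j\in\{1,p\}$ gives $|X_{.j}^{\top}\textbf{c}|+r\|X_{.j}\|_{2}+\lambda_{2}<\lambda_{1}$, and the hypothesis for interior $j$ gives $|X_{.j}^{\top}\textbf{c}|+r\|X_{.j}\|_{2}+2\lambda_{2}<\lambda_{1}$; in either case the combined upper bound forces $|X_{.j}^{\top}\textbf{\textit{u}}^{*}-D_{.j}^{\top}\textbf{\textit{v}}^{*}|<\lambda_{1}$, and Lemma 3.2(i) then delivers $\beta^{*}_{j}(\lambda_{1},\lambda_{2})=0$. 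The main obstacle here is not any hard inequality but keeping the bound on $D_{.j}^{\top}\textbf{\textit{v}}^{*}$ both tight and case-correct: one must exploit the precise sparsity pattern of the columns of $D$ rather than a crude operator-norm estimate, since a coarser bound would blur the boundary and interior cases and the essential factor $2$ in the interior case would be lost.
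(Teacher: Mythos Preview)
Your proposal is correct and follows essentially the same argument as the paper's own proof: invoke the KKT-based criterion from Lemma~3.2(i), split $|X_{.j}^{\top}\textbf{\textit{u}}^{*}-D_{.j}^{\top}\textbf{\textit{v}}^{*}|$ into the $\textbf{\textit{u}}$-part and the $\textbf{\textit{v}}$-part, bound the former by $|X_{.j}^{\top}\textbf{c}|+r\|X_{.j}\|_{2}$ via Lemma~3.4 and the Cauchy inequality, and bound the latter by $\lambda_{2}$ or $2\lambda_{2}$ according to the column structure of $D$. The paper phrases the splitting as maximizing over the constraint sets $\Omega$ and $\{\|\textbf{\textit{v}}\|_{\infty}\leq\lambda_{2}\}$ before applying the triangle inequality, but this is only a cosmetic difference from your direct triangle-inequality split on the optimal pair.
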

\begin{proof}
Based on $(i)$ in Lemma 3.1, we know $\beta^{*}_{j}(\lambda_{1},\lambda_{2})=0$, if
\begin{center} $\Big|X_{.j}^{\top}\textbf{\textit{u}}^{*}(\lambda_{1},\lambda_{2})-D_{.j}^{\top}\textbf{\textit{v}}^{*}(\lambda_{1},\lambda_{2})\Big|<\lambda_{1}$.
\end{center}
We show the detailed results of this conclusion.
\begin{equation*}
\begin{cases}
\beta^{*}_{1}(\lambda_{1},\lambda_{2})=0,\\
\quad\quad\quad\quad\quad\quad |X_{.1}^{\top}\textbf{\textit{u}}^{*}(\lambda_{1},\lambda_{2})-v_{1}^{*}(\lambda_{1},\lambda_{2})|<\lambda_{1},\\
\beta^{*}_{j}(\lambda_{1},\lambda_{2})=0, \\ \quad\quad\quad\quad\quad\quad|X_{.j}^{\top}\textbf{\textit{u}}^{*}(\lambda_{1},\lambda_{2})-v_{j}^{*}(\lambda_{1},\lambda_{2})+v_{j-1}^{*}(\lambda_{1},\lambda_{2})|<\lambda_{1},\\
\quad\quad\quad\quad\quad\quad j\in\{2,\cdots,p-1\},\\
\beta_{p}^{*}(\lambda_{1},\lambda_{2})=0,\\ \quad\quad\quad\quad\quad\quad|X_{.p}^{\top}\textbf{\textit{u}}^{*}(\lambda_{1},\lambda_{2})+v_{p-1}^{*}(\lambda_{1},\lambda_{2})|<\lambda_{1}.
\end{cases}
\end{equation*}
According to Lemma 3.3 and Cauchy inequality, we know that
\begin{align*}
&|X_{.1}^{\top}\textbf{\textit{u}}^{*}(\lambda_{1},\lambda_{2})-D_{.1}^{\top}\textbf{\textit{v}}^{*}(\lambda_{1},\lambda_{2})|\\
&\quad =|X_{.1}^{\top}\textbf{\textit{u}}^{*}(\lambda_{1},\lambda_{2})-v_{1}^{*}(\lambda_{1},\lambda_{2})|\\
&\quad \leq \underset{\textbf{\textit{u}}\in\Omega, |v_{1}|\leq\lambda_{2}}\max|X_{.1}^{\top}\textbf{\textit{u}}-v_{1}|\\
&\quad \leq\underset{\textbf{\textit{u}}\in\Omega}\max|X_{.1}^{\top}\textbf{\textit{u}}|+ \underset{|v_{1}|\leq\lambda_{2}}\max|-v_{1}|\\
&\quad \leq \underset{\|\delta\|_{2}\leq r}\max|X_{.1}^{\top}(\textbf{c}+\delta)|+\lambda_{2}\\
&\quad =|X_{.1}^{\top}\textbf{c}|+r\|X_{.1}\|_{2}+\lambda_{2}.
\end{align*}
So, $\beta^{*}_{1}(\lambda_{1},\lambda_{2})=0$ if $|X_{.1}^{\top}\textbf{c}|+r\|X_{.1}\|_{2}+\lambda_{2}<\lambda_{1}$. In the same way, for $j\in\{2,\cdots,p-1\}$, we get that
\begin{align*}
&|X_{.j}^{\top}\textbf{\textit{u}}^{*}(\lambda_{1},\lambda_{2})-D_{.j}^{\top}\textbf{\textit{v}}^{*}(\lambda_{1},\lambda_{2})|\\
&\quad=|X_{.j}^{\top}\textbf{\textit{u}}^{*}(\lambda_{1},\lambda_{2})-v_{j}^{*}(\lambda_{1},\lambda_{2})+v_{j-1}^{*}(\lambda_{1},\lambda_{2})|\\
&\quad \leq|X_{.j}^{\top}\textbf{c}|+r\|X_{.j}\|_{2}+2\lambda_{2}.
\end{align*}
And
\begin{align*}
&|X_{.p}^{\top}\textbf{\textit{u}}^{*}(\lambda_{1},\lambda_{2})-D_{.p}^{\top}\textbf{\textit{v}}^{*}(\lambda_{1},\lambda_{2})|\\
&\quad =|X_{.p}^{\top}\textbf{\textit{u}}^{*}(\lambda_{1},\lambda_{2})+v_{p-1}^{*}(\lambda_{1},\lambda_{2})|\\
&\quad \leq|X_{.p}^{\top}\textbf{c}|+r\|X_{.p}\|_{2}+\lambda_{2}.
\end{align*}
Therefore, the desired result can be concluded.
\end{proof}

Theorem 3.1 can be used to eliminate inactive features under different $\lambda_{1}$ and $\lambda_{2}$. Unlike the result in Lemma 3.1, Theorem 3.1 needs the known solution $\textbf{\textit{u}}^{*}(\tilde{\lambda}_{1},\lambda_{2})$ and this result can be applied to accelerate the computation of the solution path of fused Lasso. Next, we show the result that can be used to identify successively same elements of the solution of fused Lasso.
\begin{Theorem}
For any fixed $\lambda_{2}>0$ and $\tilde{\lambda}_{1}\in(0,\lambda^{max}_{1}(\lambda_{2})]$, \textcolor{black}{let} $\lambda_{1}\in(0,\tilde{\lambda}_{1})$.
\begin{enumerate}[1)]
\item {For any $j\in\{1,p-1\}$, $\beta^{*}_{j}(\lambda_{1},\lambda_{2})=\beta^{*}_{j+1}(\lambda_{1},\lambda_{2})$ holds if
\begin{center}
$|X^{\top}_{.j}\textbf{c}|+r\|X_{.j}\|_{2}<\lambda_{2}-\lambda_{1}$.
\end{center}}
\item {For any $j\in\{2,3,\cdots,p-2\}$, $\beta^{*}_{j}(\lambda_{1},\lambda_{2})=\beta^{*}_{j+1}(\lambda_{1},\lambda_{2})$ holds if
\begin{center}
$|X^{\top}_{.j}\textbf{c}|+r\|X_{.j}\|_{2}<2\lambda_{2}+\lambda_{1}$.
\end{center}}
\end{enumerate}
\end{Theorem}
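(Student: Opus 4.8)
The plan is to route both parts through the single sufficient condition supplied by part (ii) of Lemma 3.1: for the index $j$ in question, the equality $\beta_j^*(\lambda_1,\lambda_2)=\beta_{j+1}^*(\lambda_1,\lambda_2)$ is implied by the strict bound $|v_j^*(\lambda_1,\lambda_2)|<\lambda_2$. Hence the whole argument reduces to manufacturing a computable upper bound on the extra multiplier $v_j^*$ and forcing it below $\lambda_2$. The only two ingredients I would use are the dual feasibility of the pair $(\textbf{\textit{u}}^*,\textbf{\textit{v}}^*)$, namely $\|X^\top\textbf{\textit{u}}^*-D^\top\textbf{\textit{v}}^*\|_\infty\le\lambda_1$ together with $\|\textbf{\textit{v}}^*\|_\infty\le\lambda_2$, and the ball estimate $\textbf{\textit{u}}^*(\lambda_1,\lambda_2)\in\Omega$ of Lemma 3.3, which through the Cauchy inequality yields $|X_{.j}^\top\textbf{\textit{u}}^*|\le|X_{.j}^\top\textbf{c}|+r\|X_{.j}\|_2$, exactly the estimate recorded just before Theorem 3.1.

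For the endpoint indices (part 1) the bound is clean because the feasibility constraint isolates a single component of $\textbf{\textit{v}}^*$. Since $D_{.1}^\top\textbf{\textit{v}}^*=v_1^*$ and $D_{.p}^\top\textbf{\textit{v}}^*=-v_{p-1}^*$, the first and last components of $\|X^\top\textbf{\textit{u}}^*-D^\top\textbf{\textit{v}}^*\|_\infty\le\lambda_1$ read $|X_{.1}^\top\textbf{\textit{u}}^*-v_1^*|\le\lambda_1$ and $|X_{.p}^\top\textbf{\textit{u}}^*+v_{p-1}^*|\le\lambda_1$. First I would solve each for the lone multiplier, obtaining $|v_1^*|\le|X_{.1}^\top\textbf{\textit{u}}^*|+\lambda_1$ and $|v_{p-1}^*|\le|X_{.p}^\top\textbf{\textit{u}}^*|+\lambda_1$, and then substitute the Cauchy inequality estimate for the $X^\top\textbf{\textit{u}}^*$ term. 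Requiring the right-hand side to fall strictly below $\lambda_2$ collapses to the advertised endpoint threshold $|X_{.j}^\top\textbf{c}|+r\|X_{.j}\|_2<\lambda_2-\lambda_1$, and Lemma 3.1(ii) then delivers the fusion $\beta_j^*=\beta_{j+1}^*$.

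The interior indices (part 2) are where the genuine difficulty sits, and I expect this coupling to be the main obstacle. For $j\in\{2,\dots,p-2\}$ one has $D_{.j}^\top\textbf{\textit{v}}^*=v_j^*-v_{j-1}^*$, so the $j$-th feasibility component is $|X_{.j}^\top\textbf{\textit{u}}^*-v_j^*+v_{j-1}^*|\le\lambda_1$, which no longer isolates $v_j^*$ but ties it to the adjacent multiplier $v_{j-1}^*$. The endpoint trick therefore breaks, and to decouple I would control the neighbour through the box constraint $|v_{j-1}^*|\le\lambda_2$ coming from $\|\textbf{\textit{v}}^*\|_\infty\le\lambda_2$, combining it with the same ball-estimate bound on $|X_{.j}^\top\textbf{\textit{u}}^*|$. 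It is precisely this extra, box-controlled neighbour that injects the additional $\lambda_2$-type terms responsible for the larger interior threshold, whereas at the endpoints the single-multiplier structure keeps the estimate tight. The delicate points will be deciding which adjacent feasibility row to invoke and how sharply the neighbouring multiplier can be pinned, since the looser treatment of $v_{j-1}^*$ is exactly what separates the interior estimate from the clean endpoint one; once $|v_j^*|<\lambda_2$ is secured, Lemma 3.1(ii) closes the argument as in part 1.
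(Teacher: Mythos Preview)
Your strategy coincides with the paper's: invoke Lemma~3.1(ii), bound $|v_j^*|$ through dual feasibility, and control $|X_{.j}^\top\textbf{\textit{u}}^*|$ via the ball estimate of Lemma~3.3. For part~1 your argument is essentially the paper's, with one index slip you should notice: for $j=p-1$ you correctly use the $p$-th feasibility row, which involves $X_{.p}$, so the resulting sufficient condition is $|X_{.p}^\top\textbf{c}|+r\|X_{.p}\|_2<\lambda_2-\lambda_1$, not the stated one with $X_{.p-1}$. (The paper's own proof writes the constraint for this case as $|X_{.p-1}^\top\textbf{\textit{u}}^*+v_{p-1}|\le\lambda_1$, which is not actually a row of $X^\top\textbf{\textit{u}}^*-D^\top\textbf{\textit{v}}^*$, so the same index mismatch is present there.)

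Part~2 has a genuine gap in your outline. The only bound you can extract from $|X_{.j}^\top\textbf{\textit{u}}^*-v_j^*+v_{j-1}^*|\le\lambda_1$ together with the box constraint $|v_{j-1}^*|\le\lambda_2$ is $|v_j^*|\le|X_{.j}^\top\textbf{\textit{u}}^*|+\lambda_1+\lambda_2$, and forcing this strictly below $\lambda_2$ requires $|X_{.j}^\top\textbf{\textit{u}}^*|<-\lambda_1$, which is vacuous; no choice of adjacent feasibility row repairs this, since dual feasibility alone never excludes $|v_j^*|=\lambda_2$ when $|X_{.j}^\top\textbf{\textit{u}}^*|$ is moderate (e.g.\ take $X_{.j}^\top\textbf{\textit{u}}^*=0$, $v_{j-1}^*=\lambda_2$, $v_j^*=\lambda_2$). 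The paper's argument here suffers from the same difficulty but in a different form: it shows that the sufficient condition $\tau_j<\lambda_2$ \emph{implies} $|X_{.j}^\top\textbf{\textit{u}}^*|<2\lambda_2+\lambda_1$ and then asserts the converse without justification. So neither your sketch nor the paper's proof, as written, actually establishes part~2.
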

\begin{proof}
For any $j\in\{1,2,\cdots,p-1\}$,
\begin{align*}
\begin{split}
|v^{*}_{j}(\lambda_{1},\lambda_{2})|&\leq\max|v_{j}|\\
&\quad s.t.~\|X^{\top}\textbf{\textit{u}}^{*}(\lambda_{1},\lambda_{2})-D^{\top}\textbf{\textit{v}}\|_{\infty}\leq\lambda_{1}\\
&\quad~~~~~~\|\textbf{\textit{v}}\|_{\infty}\leq\lambda_{2}.
\end{split}
\end{align*}

For $j=1$, it holds that
\begin{align*}
\begin{split}
|v^{*}_{1}(\lambda_{1},\lambda_{2})|&\leq\max|v_{1}|\\
&\quad s.t.~|X^{\top}_{.1}\textbf{\textit{u}}^{*}(\lambda_{1},\lambda_{2})-v_{1}|\leq\lambda_{1}\\
&\quad~~~~~|v_{1}|\leq\lambda_{2}
\end{split}\\
&\leq \min\left\{|X^{\top}_{.1}\textbf{\textit{u}}^{*}(\lambda_{1},\lambda_{2})|+\lambda_{1},\lambda_{2}\right\}.
\end{align*}
The constraint $|X^{\top}_{.1}\textbf{\textit{u}}^{*}(\lambda_{1},\lambda_{2})-v_{1}|\leq\lambda_{1}$ leads to
$$X^{\top}_{.1}\textbf{\textit{u}}^{*}(\lambda_{1},\lambda_{2})-\lambda_{1}\leq v^{*}_{1}(\lambda_{1},\lambda_{2})\leq X^{\top}_{.1}\textbf{\textit{u}}^{*}(\lambda_{1},\lambda_{2})+\lambda_{1},$$
which implies
\begin{align*}
|v^{*}_{1}(\lambda_{1},\lambda_{2})|&\leq
|X^{\top}_{.1}\textbf{\textit{u}}^{*}(\lambda_{1},\lambda_{2})+\lambda_{1}|\\
&\leq |X^{\top}_{.1}\textbf{c}|+r\|X_{.1}\|_{2}+\lambda_{1}.
\end{align*}
So, $|v^{*}_{1}(\lambda_{1},\lambda_{2})|\leq\max\{ |X^{\top}_{.1}\textbf{c}|+r\|X_{.1}\|_{2}+\lambda_{1},\lambda_{2}\}$.

For $j=p-1$, same as $j=1$, it holds that
\begin{align*}
\begin{split}
|v^{*}_{p-1}(\lambda_{1},\lambda_{2})|&\leq\max|v_{p-1}|\\
&\quad s.t.~|X^{\top}_{.p-1}\textbf{\textit{u}}^{*}(\lambda_{1},\lambda_{2})+v_{p-1}|\leq\lambda_{1}\\
&\quad~~~~~~|v_{p-1}|\leq\lambda_{2}
\end{split}\\
&\leq \min\left\{|X^{\top}_{.p-1}\textbf{\textit{u}}^{*}(\lambda_{1},\lambda_{2})|+\lambda_{1},\lambda_{2}\right\}\\
&\leq \min\left\{|X^{\top}_{.p-1}\textbf{c}|+r\|X_{.p-1}\|_{2}+\lambda_{1},\lambda_{2}\right\}.
\end{align*}

When $j\in\{1,p-1\}$, according to the result (ii) in Lemma 3.2,  $\beta^{*}_{j}(\lambda_{1},\lambda_{2})=\beta^{*}_{j+1}(\lambda_{1},\lambda_{2})$ if $$\min\left\{|X^{\top}_{.j}\textbf{c}|+r\|X_{.j}\|_{2}+\lambda_{1},\lambda_{2}\right\}<\lambda_{2},$$
which means $$|X^{\top}_{.j}\textbf{c}|+r\|X_{.j}\|_{2}<\lambda_{2}-\lambda_{1}.$$

When $j\in\{2,\cdots,p-2\}$,
\begin{align*}
\begin{split}
|v^{*}_{j}(\lambda_{1},\lambda_{2})|&\leq \max|v_{j}|\\
&\quad s.t.~|X^{\top}_{.j}\textbf{\textit{u}}^{*}(\lambda_{1},\lambda_{2})-v_{j}+v^{*}_{j-1}(\lambda_{1},\lambda_{2})|\leq\lambda_{1}\\
&\quad ~~~~~~|X^{\top}_{.j+1}\textbf{\textit{u}}^{*}(\lambda_{1},\lambda_{2})-v^{*}_{j+1}(\lambda_{1},\lambda_{2})+v_{j}|\leq\lambda_{1}\\
&\quad ~~~~~~|v_{j}|\leq\lambda_{2}
\end{split}\\
\begin{split}&\leq
\max|v_{j}|\\
&\quad s.t.~|X^{\top}_{.j}\textbf{\textit{u}}^{*}(\lambda_{1},\lambda_{2})-v_{j}+v^{*}_{j-1}(\lambda_{1},\lambda_{2})|\leq\lambda_{1}\\
&\quad ~~~~~~|v_{j}|\leq\lambda_{2}
\end{split}.
\end{align*}
The constraint $|X^{\top}_{.j}\textbf{\textit{u}}^{*}(\lambda_{1},\lambda_{2})-v_{j}+v^{*}_{j-1}(\lambda_{1},\lambda_{2})|\leq\lambda_{1}$ leads to
$v^{*}_{j}(\lambda_{1},\lambda_{2})\geq X^{\top}_{.j}\textbf{\textit{u}}^{*}(\lambda_{1},\lambda_{2})+v^{*}_{j-1}(\lambda_{1},\lambda_{2})-\lambda_{1}$
and
$v^{*}_{j}(\lambda_{1},\lambda_{2})\leq
X^{\top}_{.j}\textbf{\textit{u}}^{*}(\lambda_{1},\lambda_{2})+v^{*}_{j-1}(\lambda_{1},\lambda_{2})+\lambda_{1},$
which implies that
\begin{align*}
|v^{*}_{j}(\lambda_{1},\lambda_{2})|&\leq\tau_{j}:=\max\{
|X^{\top}_{.j}\textbf{\textit{u}}^{*}(\lambda_{1},\lambda_{2})+v^{*}_{j-1}(\lambda_{1},\lambda_{2})+\lambda_{1}|,\\
&\quad\quad\quad\quad\quad\quad
|X^{\top}_{.j}\textbf{\textit{u}}^{*}(\lambda_{1},\lambda_{2})+v^{*}_{j-1}(\lambda_{1},\lambda_{2})-\lambda_{1}|\}.
\end{align*}
Combining the constraint $|v_{j}^{*}(\lambda_{1},\lambda_{2})|\leq\lambda_{2}$, then
\begin{align*}
|v^{*}_{j}(\lambda_{1},\lambda_{2})|\leq \min\left\{\tau_{j},\lambda_{2}\right\}.
\end{align*}
To make sure $|v^{*}_{j}(\lambda_{1},\lambda_{2})|<\lambda_{2}$, the condition $\tau_{j}<\lambda_{2}$ needs to be required, which means\\
\indent(a) $|X^{\top}_{.j}\textbf{\textit{u}}^{*}(\lambda_{1},\lambda_{2})+v^{*}_{j-1}(\lambda_{1},\lambda_{2})+\lambda_{1}|<\lambda_{2}$ \\or \\
\indent(b) $|X^{\top}_{.j}\textbf{\textit{u}}^{*}(\lambda_{1},\lambda_{2})+v^{*}_{j-1}(\lambda_{1},\lambda_{2})-\lambda_{1}|<\lambda_{2}$.\\
\noindent The condition (a) means
$$X^{\top}_{.j}\textbf{\textit{u}}^{*}(\lambda_{1},\lambda_{2})>-\lambda_{2}-v^{*}_{j-1}(\lambda_{1},\lambda_{2})-\lambda_{1}\geq -2\lambda_{2}-\lambda_{1}$$
and
$$X^{\top}_{.j}\textbf{\textit{u}}^{*}(\lambda_{1},\lambda_{2})<\lambda_{2}-v^{*}_{j-1}(\lambda_{1},\lambda_{2})-\lambda_{1}\leq 2\lambda_{2}-\lambda_{1},$$
which implies
$$|X^{\top}_{.j}\textbf{\textit{u}}^{*}(\lambda_{1},\lambda_{2})|< \max\{2\lambda_{2}+\lambda_{1},|2\lambda_{2}-\lambda_{1}|\}=2\lambda_{2}+\lambda_{1}.$$
\noindent The condition (b) means
$$X^{\top}_{.j}\textbf{\textit{u}}^{*}(\lambda_{1},\lambda_{2})>-\lambda_{2}-v^{*}_{j-1}(\lambda_{1},\lambda_{2})+\lambda_{1}\geq -2\lambda_{2}+\lambda_{1}$$
and
$$X^{\top}_{.j}\textbf{\textit{u}}^{*}(\lambda_{1},\lambda_{2})<\lambda_{2}-v^{*}_{j-1}(\lambda_{1},\lambda_{2})+\lambda_{1}\leq 2\lambda_{2}+\lambda_{1},$$
which implies
$$|X^{\top}_{.j}\textbf{\textit{u}}^{*}(\lambda_{1},\lambda_{2})|< \max\{2\lambda_{2}+\lambda_{1},|2\lambda_{2}-\lambda_{1}|\}=2\lambda_{2}+\lambda_{1}.$$\\
\noindent These analysis implies that $|v^{*}_{j}(\lambda_{1},\lambda_{2})|<\lambda_{2}$ holds if  $$|X^{\top}_{.j}\textbf{\textit{u}}^{*}(\lambda_{1},\lambda_{2})|< 2\lambda_{2}+\lambda_{1},$$
which can be guaranteed by
$|X^{\top}_{.j}\textbf{c}|+r\|X_{.j}\|_{2}<2\lambda_{2}+\lambda_{1}.$

%
\end{proof}

Replace $\textbf{c}=\frac{1}{2}\textbf{\textit{y}}$ and $r=\frac{1}{2}\|\textbf{\textit{y}}\|_{2}$ in Theorem 3.1 and Theorem 3.2, there are some closed-form results.
\begin{Corollary}
Let $\lambda_{2}>0$  and  $\lambda_{1}\in(0,\lambda^{max}_{1}(\lambda_{2}))$.

For any $j\in\{1, \cdots, p\}$, $\beta^{*}_{j}(\lambda_{1},\lambda_{2})=0$ if
\begin{equation*}
\begin{cases}
\|\textbf{\textit{y}}\|_{2}\|X_{.j}\|_{2}+|X^{\top}_{.j}\textbf{\textit{y}}|<2(\lambda_{1}-\lambda_{2}),&~~ j\in\{1, p\},\\
\|\textbf{\textit{y}}\|_{2}\|X_{.j}\|_{2}+|X^{\top}_{.j}\textbf{\textit{y}}|<2(\lambda_{1}-2\lambda_{2}),&~~j\in\{2, \cdots, p-1\}.
\end{cases}
\end{equation*}
The result $\beta^{*}_{j}(\lambda_{1},\lambda_{2})=\beta^{*}_{j+1}(\lambda_{1},\lambda_{2})$ if
\begin{equation*}
\begin{cases}
\|\textbf{\textit{y}}\|_{2}\|X_{.j}\|_{2}+|X^{\top}_{.j}\textbf{\textit{y}}|<2(\lambda_{2}-\lambda_{1}),&~~ j\in\{1, p-1\},\\
\|\textbf{\textit{y}}\|_{2}\|X_{.j}\|_{2}+|X^{\top}_{.j}\textbf{\textit{y}}|<2(2\lambda_{2}+\lambda_{1}),&~~j\in\{2, \cdots, p-2\}.
\end{cases}
\end{equation*}
\end{Corollary}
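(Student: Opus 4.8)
The plan is to treat this as a direct specialization of Theorem 3.1 and Theorem 3.2, using Corollary 3.1 to fix the bounding ball explicitly. First I would invoke Corollary 3.1: taking the anchor parameter $\tilde{\lambda}_{1}=\lambda^{\max}_{1}(\lambda_{2})$, the KKT system forces $\textbf{\textit{u}}^{*}(\tilde{\lambda}_{1},\lambda_{2})=\textbf{\textit{y}}$, so the center and radius of $\Omega$ collapse to $\textbf{c}=\frac{1}{2}\textbf{\textit{y}}$ and $r=\frac{1}{2}\|\textbf{\textit{y}}\|_{2}$. This is precisely what makes the screening bound usable without a separately warm-started dual point, since $\lambda^{\max}_{1}(\lambda_{2})$ is available in closed form from Lemma 3.2.

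Next I would substitute $\textbf{c}=\frac{1}{2}\textbf{\textit{y}}$ and $r=\frac{1}{2}\|\textbf{\textit{y}}\|_{2}$ into the four screening inequalities. In every case the left-hand side $|X^{\top}_{.j}\textbf{c}|+r\|X_{.j}\|_{2}$ becomes $\frac{1}{2}|X^{\top}_{.j}\textbf{\textit{y}}|+\frac{1}{2}\|\textbf{\textit{y}}\|_{2}\|X_{.j}\|_{2}$, so multiplying each inequality through by $2$ yields the stated thresholds: $2(\lambda_{1}-\lambda_{2})$ and $2(\lambda_{1}-2\lambda_{2})$ for the vanishing-coefficient conditions inherited from Theorem 3.1, and $2(\lambda_{2}-\lambda_{1})$ and $2(2\lambda_{2}+\lambda_{1})$ for the adjacent-equality conditions inherited from Theorem 3.2.

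I expect no genuine obstacle here: all of the content is carried by the evaluation $\textbf{\textit{u}}^{*}(\lambda^{\max}_{1}(\lambda_{2}),\lambda_{2})=\textbf{\textit{y}}$ from Corollary 3.1, after which the conclusion is a one-line algebraic rescaling of the two general theorems. The only bookkeeping worth stating explicitly is that the index partitions carry over verbatim---$j\in\{1,p\}$ versus the interior $j\in\{2,\cdots,p-1\}$ for the zero rule, and $j\in\{1,p-1\}$ versus $j\in\{2,\cdots,p-2\}$ for the fusion rule---so that each closed-form inequality is paired with the correct index range.
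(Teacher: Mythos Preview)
Your proposal is correct and matches the paper's own argument essentially line for line: the paper introduces Corollary 3.2 with the single sentence ``Replace $\textbf{c}=\frac{1}{2}\textbf{\textit{y}}$ and $r=\frac{1}{2}\|\textbf{\textit{y}}\|_{2}$ in Theorem 3.1 and Theorem 3.2, there are some closed-form results,'' which is exactly the specialization via Corollary 3.1 followed by the factor-of-two rescaling that you describe.
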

\begin{Remark}
In this paper, the safe feature identification rule is composed of Theorem 3.1 and Theorem 3.2, where Theorem 3.1 eliminates features with 0 coefficient and Theorem 3.2 identifies the features with same coefficients.  The main computations of our screening rule are the values of $|X^{\top}_{.j}\textbf{c}|$ and $\|X_{.j}\|_{2}$, where $j\in\{1,2,\cdots,p\}$. This leads to the time complexity of our rule is $O(np)$. Here, we present the detailed process that uses Theorem 3.1 and Theorem 3.2 to accelerate the computation of fused Lasso. For any fixed $\lambda_{2}>0$, suppose we have a sequence values of $\lambda_{1}$ such that $\lambda^{max}_{1}(\lambda_{2})=\lambda^{(1)}_{1}>\lambda^{(2)}_{1}>\cdots>\lambda^{(k)}_{1}>0.$
 Because $\boldsymbol{\beta}^{*}(\lambda^{(1)}_{1},\lambda_{2})=0$ and $\textbf{\textit{u}}^{*}(\lambda^{(1)}_{1},\lambda_{2})=\textbf{\textit{y}}$, the inactive features can be eliminated by setting $\tilde{\lambda}_{1}=\lambda^{(1)}_{1}$ and $\lambda_{1}=\lambda^{(2)}_{1}$ in Theorem 3.1 and Theorem 3.2. Under the help of our screening rule, $\boldsymbol{\beta}^{*}(\lambda^{(2)}_{1},\lambda_{2})$ and $\textbf{\textit{u}}^{*}(\lambda^{(2)}_{1},\lambda_{2})$ can be solved on the reduced data set. Because $\textbf{\textit{u}}^{*}(\lambda^{(2)}_{1},\lambda_{2})$ is known, the solution $\boldsymbol{\beta}^{*}(\lambda^{(3)}_{1},\lambda_{2})$ and $\textbf{\textit{u}}^{*}(\lambda^{(3)}_{1},\lambda_{2})$ can also be solved on the reduced data set, by setting $\tilde{\lambda}_{1}=\lambda^{(2)}_{1}$ and $\lambda_{1}=\lambda^{(3)}_{1}$ in these theorems. With the same process, the solution of fused Lasso can be solved on the reduced data set for every $\lambda_{1}$ and $\lambda_{2}$. Therefore, the computational process of fused Lasso can be speeded up.
\end{Remark}

\section{Numerical Experiments}\label{sec:exp}
In this section, we evaluate the safe feature identification rule on some simulation data sets and real data sets. \textcolor{black}{In order to do so, we apply the function \textit{fusedLeastR} from the SLEP package (Liu et al. \cite{liu2009slep}) to solve fused Lasso. The core idea of fusedLeastR is the Efficient Fused Lasso Algorithm from Liu et al. \cite{liu2010efficient}. Our screening rule is encoded as a new function and implemented before the function fusedLeastR. For every setting of $\lambda_{1}$ and $\lambda_{2}$, we implement this process to eliminate inactive features and identify the features with same coefficients.} All experiments are performed on the MATLAB R2018b with Intel(R) Core(TM) i5-8250U 1.60 CPU and 8G RAM.

To illustrate the efficiency of our rule, we choose 6 values of $\lambda_{2}$ from $\{10^{-4}, 10^{-3}, 10^{-2}, 10^{-1}, 10^{0}, 10^{1}\}$. For each $\lambda_{2}$, we run SLEP along a sequence of 100 tuning parameters equally spaced on the $\lambda_{1}/\lambda^{max}_{1}(\lambda_{2})$ from 0.01 to 1.  Same as the most screening rule papers (Tibshirani et al. \cite{tibshirani2012strong}, Wang et al. \cite{wang2015lasso},  Ndiaye et al. \cite{ndiaye2017gap}, Xiang et al. \cite{xiang2016screening} and so on), we use two quantities to measure the rule, that are rejection ratio and speedup. The rejection ratio is defined as
$\textcolor{black}{\frac{N_{s}}{N_{f}}}$. Under different tuning parameters, $N_{s}$ denotes the number of discarded features by the rule and $N_{f}$ denotes the  actual number of features that are inactive. This ratio measures the efficiency of our rule.  The speedup is defined as $\frac{T_{f}}{T_{s}}$,
where $T_{s}$ and $T_{f}$ indicate the computational time of SLEP with and without our feature identification rule, respectively. Here, $T_{f}$ is the overall time of  SLEP, under 600 different $\lambda_{1}$ and $\lambda_{2}$. $T_{s}$ is the overall time of  SLEP embedded with our  rule, under 600 different $\lambda_{1}$ and $\lambda_{2}$. This quantity, as the name illustrated, indicates  the reduced computational time because of the rule.  The larger the rejection ratio and speedup are, the more efficient the rule is.
\subsection{Simulation}
\subsection{Simulation data}
In this section, we evaluate the safe feature identification rule on some simulation data, which are generated from the true linear regression
$\textbf{\textit{y}}=X\boldsymbol{\beta}^{*}+\epsilon$
with 
\begin{center}
$n\in\{50,100,150,200\}$ and $p\in\left\{1000, 3000, 5000, 8000, 10000, 12000\right\}$. 
\end{center}
So, we consider 24 different data sets. The true coefficient vector $\boldsymbol{\beta}^{*}$ is set as follows.
$$\beta^{*}_{1}=2, \beta^{*}_{3}=1.5, \beta^{*}_{5}=0.8,\beta^{*}_{8}=1,\beta^{*}_{10}=1.75,\beta^{*}_{13}=0.75,\beta^{*}_{16:50}=0.3,$$
and the other elements of $\boldsymbol{\beta}^{*}$ are set as 0. In the true model, $X$ is generated from the multivariate normal distribution with mean vector $\boldsymbol{\mu}$ and covariance matrix $\Sigma$. Here, we consider two cases in \cite{wang2015fused} that are $\Sigma^{1}=I_{p}$ and $\Sigma^{2}_{ij}=0.5^{|i-j|}$. To show the difference between features, as in \cite{shang2021ell, shang2022safe}, we set $\boldsymbol{\mu}$ as
\begin{center}
$\mu_{3:7}=10$,
$\mu_{70:90}=5$, $\mu_{\lfloor\frac{p}{2}\rfloor:\lfloor\frac{2p}{3}\rfloor}=-2,$
\end{center}
and the other elements are set as 0. In addition, elements of the random error $\epsilon$ are generated from the Gaussian distribution with mean 0 and standard variance 0.1 as in \cite{wang2015fused}.

\begin{figure}[htbp]
\begin{minipage}[t]{0.5\linewidth}
\centering
\includegraphics[width=2.5in]{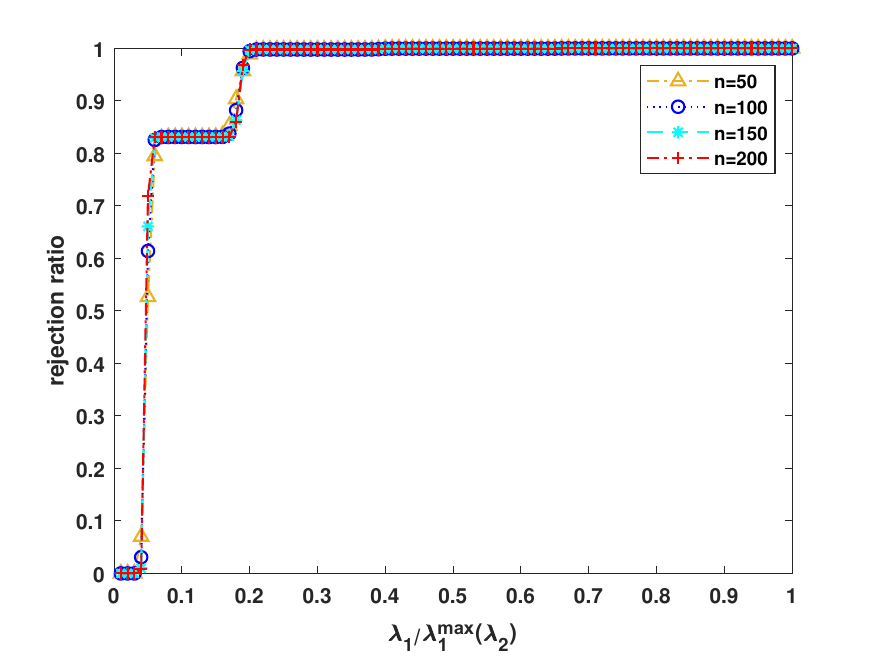}
\end{minipage}
\begin{minipage}[t]{0.5\linewidth}
\centering
\includegraphics[width=2.5in]{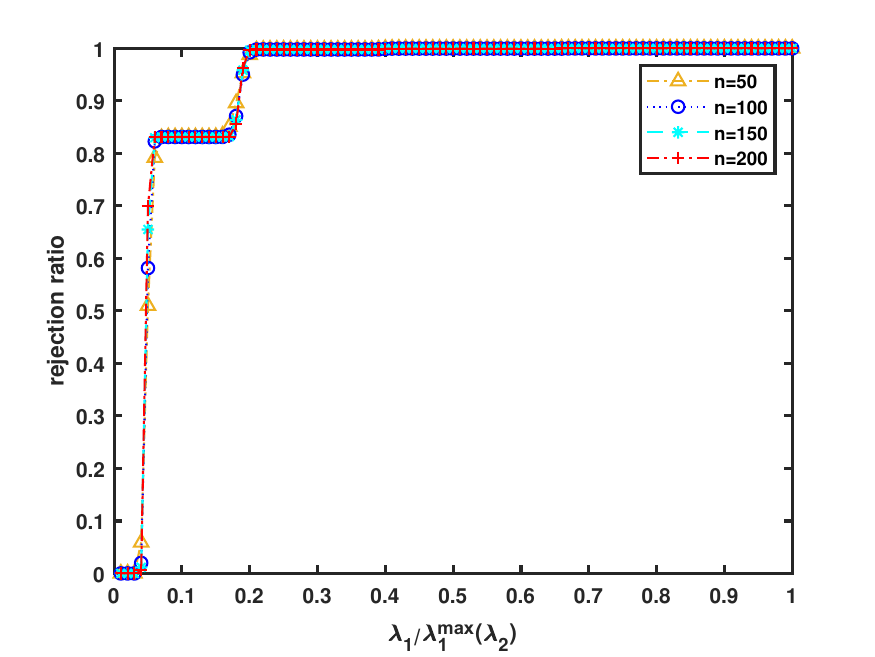}
\end{minipage}
\caption{The rejection ratio on $\Sigma^{1}$ and $\Sigma^{2}$. The left figure is the rejection ratio on $\Sigma^{1}$ and the right one is the result on $\Sigma^{2}$.}
\end{figure}

For every data set, to show the performance of our result, we report the average rejection ratios of 10 simulation in Fig. 1. To make this paper simple, we only show the average rejection ratios of $p=12000$ under different $n$ in Fig. 1. From Figure 1, we know that our screening rule has a good performance on the rejection ratio, no matter the values of $n$. The rejection ratios are all over 0.8 when $\lambda_{1}/\lambda^{max}_{1}(\lambda_{2})$ is less than 0.1 and almost equal to 1 when $\lambda_{1}/\lambda^{max}_{1}(\lambda_{2})$ is larger than 0.1, which means our rule identifies almost all inactive features. From this figure, there are slightly differences between different values $n$ and different covariance matrices. To show the detailed results of our result, we report the speedup in the following tables.

For every data set, to show the performance of our rule, we report the average computational time of 10 simulation. We present the computational time under $\Sigma^{1}$ and $\Sigma^{2}$ in TABLE 1 and TABLE 2, respectively. In each table, we show the average computational time of SLEP, SLEP embedded with our rule and the rule. To be more comprehensive, we also record the standard variance of these computational time.  For instance, the result 0.357(0.002) in TABLE 1 means the average time of our screening rule under 10 simulation is 0.357 and the standard variance is 0.002, when $n=50$ and $p=1000$.
\begin{table}[htbp]
\caption{The computational time of SLEP, SLEP with our rule and our rule, when $\Sigma=I_{p}$. With these time, the speedup can be obtained. Here, SLEP+ means SLEP embedded with our  rule.}
\begin{center}
\begin{tabular}{|c|c|cccc|}
\hline
$n$     & $p$ &SLEP     & \multicolumn{1}{c}{SLEP+} &our rule    & \textbf{speedup}          \\
\hline
\multirow{6}{*}{50}
 & 1000  & 21.18(6.216)    & 3.518(0.032)       &0.357(0.002)   & \textbf{6.581(0.076)}   \\
& 3000  & 70.38(0.132)   &   8.191(0.031)       & 1.156(0.008)  & \textbf{8.592(0.034)}   \\
 & 5000  &122.8(0.275)  &  13.17(0.079)       &1.938(0.006)  &\textbf{9.323(0.061)}\\
 & 8000  &197.0(0.459)  &   20.63(0.050)       &3.306(0.019)   &  \textbf{9.553(0.030)}  \\
& 10000 &262.8(3.119)   &  26.44(0.214)       &4.338(0.024)   &\textbf{9.942(0.093)}    \\
& 12000 &348.1(5.954)   &   33.05(0.180)       &5.299(0.033)  & \textbf{10.53(0.218)}   \\
\hline
\multicolumn{1}{|c|}{\multirow{6}{*}{100}} & 1000  & 20.55(0.149)   & 3.923(0.042)                    & 0.547(0.009)  & \textbf{5.249(0.068)}   \\
\multicolumn{1}{|c|}{} & 3000  &60.25(0.178)    &9.655(0.210)  &1.943(0.013)   & \textbf{6.242(0.131)}   \\
\multicolumn{1}{|c|}{}                                                                                            & 5000  & 114.5(2.300) & 16.23(0.198)                   & 3.675(0.121)  & \textbf{7.037(0.155)}   \\
\multicolumn{1}{|c|}{} & 8000  &253.2(4.349)   &28.97(0.280)       & 5.830(0.009)  & \textbf{8.737(0.566)}   \\
\multicolumn{1}{|c|}{}                                                                                            & 10000 & 350.9(5.127) & 38.80(0.327)                   & 7.241(0.052)  & \textbf{9.044(0.115)}  \\
\multicolumn{1}{|c|}{}                                                                                            & 12000 &500.1(9.795) &51.41(0.551)                   &8.967(0.035)  & \textbf{9.378(0.447)}  \\
\hline
\multirow{6}{*}{150}                                                          & 1000  & 30.36(0.084)   & 4.397(0.030)    & 0.836(0.015)   & \textbf{6.905(0.040)}   \\
& 3000  & 94.42(0.437)   & 11.08(0.086)    & 2.859(0.021)  & \textbf{8.522(0.069)}   \\
& 5000  & 235.3(8.832)   & 22.08(1.492)    & 5.294(0.405)   & \textbf{10.69(0.759)}  \\
& 8000  & 530.1(1.524)   & 39.68(0.059)    & 8.310(0.015)  & \textbf{13.36(0.034)}  \\
& 10000 & 770.2(24.88)   & 53.44(0.229)    & 10.30(0.026)   & \textbf{14.41(0.413)}   \\
& 12000 & 984.0(22.11)   & 67.26(0.702)   & 12.47(0.082)   & \textbf{14.63(0.196)}   \\
\hline
\multicolumn{1}{|l|}{\multirow{6}{*}{200}}
& 1000  &21.64(0.226)    &4.655(0.021)  &1.056(0.008)   &\textbf{4.650(0.062)}    \\
& 3000  &80.16(3.368)   &13.01(0.078)  & 3.930(0.024)  &\textbf{6.159(0.256)}    \\                                                                                            & 5000  &238.9(4.467)    &27.55(0.197)   &6.704(0.015)   &\textbf{8.673(0.146)}  \\
& 8000  &513.9(11.47)    &51.87(0.730)   &10.85(0.035)   &\textbf{9.907(0.189)}    \\
& 10000 &696.9(5.282)    &67.84(0.436)   &13.40(0.058)   &\textbf{10.27(0.118)}  \\                                                                                            & 12000 &884.4(16.62)  &86.80(0.415)   &16.17(0.062)   &\textbf{10.19(0.171)}   \\\hline
\end{tabular}
\end{center}
\end{table}
\begin{table}[htbp]
\caption{The computational time of SLEP, SLEP with our rule and our rule, when $\Sigma_{i,j}=0.5^{|i-j|}$. With these time, the speedup can be obtained. Here, SLEP+ means SLEP embedded with our rule.}
\centering
\begin{tabular}{|c|c|cccc|}

\hline
$n$      & $p$ &SLEP     & \multicolumn{1}{c}{SLEP+} &our rule    & \textbf{speedup}          \\
\hline
\multirow{6}{*}{50}
 & 1000  &21.44(0.165)  &3.811(0.117)        &0.364(0.006)  &\textbf{5.641(0.143)}  \\
& 3000  &64.38(0.390)    &8.802(0.201)        &1.148(0.009)   &\textbf{7.317(0.162)}   \\
 & 5000  &109.4(0.324)   &13.84(0.124)     &1.914(0.007)  &\textbf{7.908(0.086)}    \\
 & 8000  &176.8(1.147)    &21.54(0.217)     &3.244(0.009)   &\textbf{8.210(0.115)}    \\
& 10000 &234.5(3.732) &27.50(0.097)         &4.230(0.018) &\textbf{8.526(0.114)}    \\
& 12000 &307.5(9.097)   &34.48(0.178)       &5.188(0.036)   &\textbf{8.919(0.279)}    \\
\hline
\multicolumn{1}{|c|}{\multirow{6}{*}{100}} & 1000  & 27.43(0.051)   & 3.773(0.950)                    & 0.522(0.004)  & \textbf{6.736(0.071)}   \\
\multicolumn{1}{|c|}{} & 3000  & 83.03(0.493)   &  9.622(0.092)                   &1.914(0.027)   & \textbf{8.617(0.137)}   \\
\multicolumn{1}{|c|}{}                                                                                            & 5000  & 186.5(10.51) & 18.66(0.553)                   & 3.747(0.037)  & \textbf{9.988(0.300)}   \\
\multicolumn{1}{|c|}{} & 8000  & 330.7(6.174)   & 29.23(0.251)      & 5.790(0.011)  & \textbf{11.31(0.288)}   \\
\multicolumn{1}{|c|}{}                                                                                            & 10000 & 539.8(45.45) & 43.28(2.964)                   & 7.619(0.176)  & \textbf{12.46(0.222)}  \\
\multicolumn{1}{|c|}{}                                                                                            & 12000 & 673.2(42.51) & 54.21(5.824)                   & 9.145(0.389)  & \textbf{12.47(0.687)}  \\
\hline
\multirow{6}{*}{150}
& 1000  &31.84(0.281)  &4.776(0.112)  &0.844(0.012) &\textbf{6.670(0.138)}   \\
& 3000  &98.51(0.907)  &11.66(0.228)  &2.905(0.078)  &\textbf{8.451(0.197)}   \\
& 5000  &251.6(18.42)   &21.90(0.363)   &5.095(0.023)   &\textbf{11.50(1.020)}  \\
& 8000  &534.9(4.395)  &41.08(0.286)    &8.360(0.018)   &\textbf{13.02(0.141)}   \\
& 10000 &750.1(1.873)  &56.31(0.981)     &10.46(0.051)   &\textbf{13.32(0.242)}    \\
& 12000 &967.2(5.543)    &68.60(0.894)     &12.58(0.084)    &\textbf{14.10(0.103)}    \\
\hline
\multicolumn{1}{|c|}{\multirow{6}{*}{200}}
& 1000  &21.91(0.131)    &4.851(0.068)   &1.096(0.016)   &\textbf{4.518(0.042)}    \\
& 3000  &82.10(4.288)    &13.40(0.078)   &3.936(0.013)   &\textbf{6.123(0.198)}    \\                                                                                            & 5000  &246.6(9.684)    &27.81(0.200)   &6.674(0.012)   &\textbf{8.865(0.083)}   \\
& 8000  &505.4(8.068)    &52.98(0.709)   &10.86(0.040)   &\textbf{9.538(0.152)}   \\
& 10000 &688.3(11.98)    &69.08(1.087)   &13.05(0.925)   &\textbf{9.966(0.244)}   \\                                                                                            & 12000 &887.8(17.40)    &86.62(1.209)   &17.17(1.450)   &\textbf{10.24(0.132)}   \\\hline
\end{tabular}
\end{table}

From these tables, we can conclude the following statements. (i) Comparing to the computation of the solver SLEP and SLEP embedded with our rule, the average computational time of our rule is relatively small,  which means our rule has a low computational cost. (ii) The standard variance of the computational time of our rule are very small, which means the screening rule is stable. (iii) The value of speed up varies with different data sets. In these tables, the smallest speedup is 4.518 and the largest is 14.63. (iv) For any fixed $n$, the value of speedup increases with the feature size $p$ increasing. For instance, the speedup increases from 8.522 to 14.63 with $p$ increasing from 1000 to 12000, when the sample size $n=150$. (v) For any fixed $p$, the speedup has a decreased tendency when $n$ is larger than 150. When $n=200$, the speedup values are smaller than the corresponding results of 50, 100 and 150. This is because that our rule needs to calculate the values of $|X^{\top}_{.j}\textbf{c}|$ and $\|X_{.j}\|_{2}$, which needs a time complexity $O(n)$. When $n$ is larger than 150, the computational time of these two values increase, which leads to the computational time of the screening rule increased and the speedup decreased. (vi) The speedup values of $\Sigma^{1}$ and $\Sigma^{2}$ have slightly differences, which means our rule fits data sets with different covariance matrices.
\subsection{Real data}
In this section, we evaluate the safe feature identification rule on some real data sets: Prostate \cite{efron2012large}, Srbct \cite{khan2001classification},  20NewsHome, Reuters21578, TDT2 \footnote{http://www.cad.zju.edu.cn/home/dengcai/Data/TextData.html}   and Leukemia \footnote{http://portals.broadinstitute.org/cgi-bin/cancer/datasets.cgi}. To be simple, we omit the introduction of these data sets and just show the size of them in Fig. 2. For these data set, we show rejection ratios in Fig. 2 and the speedup in TABLE 3. \textcolor{black}{In TABLE 3, we add the sparse level under different data sets. Because $N_{f}$ varies with different $\lambda_{1}$ and $\lambda_{2}$, the index sparse level is defined as the interval of $N_{f}/p$ when $\lambda_{1}/\lambda^{max}_{1}(\lambda_{2})\in[0.01:0.01:1]$. This index reflects sparse level of the model solution under $\lambda_{1}$ and $\lambda_{2}$.}
\begin{figure}[htbp]
\subfigure[Prostate $(102\times6033)$]{
\begin{minipage}[t]{0.5\linewidth}
\centering
\includegraphics[width=2.5in]{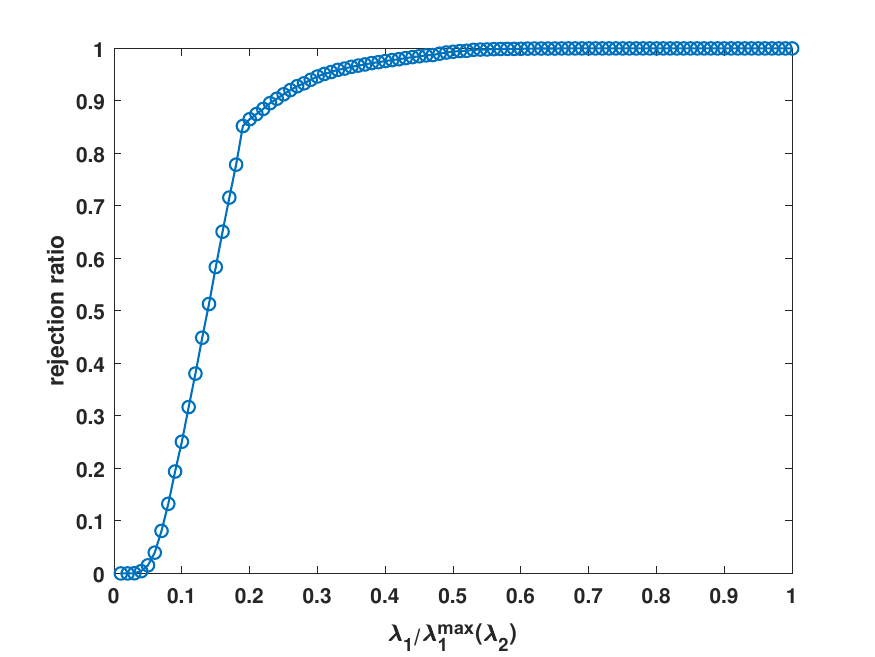}
\end{minipage}
}
\subfigure[Srbct $(63\times2308)$]{
\begin{minipage}[t]{0.5\linewidth}
\centering
\includegraphics[width=2.5in]{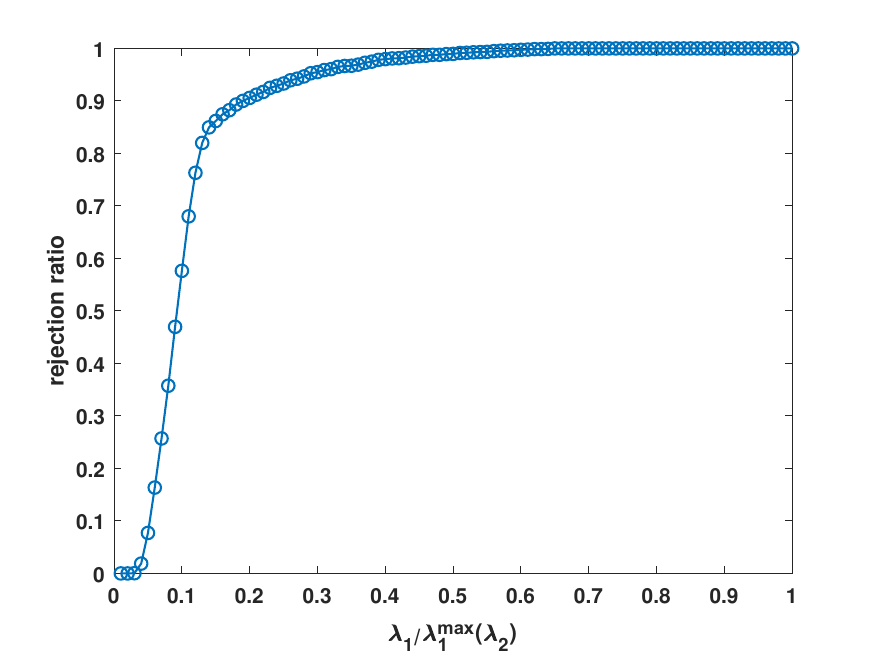}
\end{minipage}
}

\subfigure[Reuters21578 $(8293\times18933)$]{
\begin{minipage}[t]{0.5\linewidth}
\centering
\includegraphics[width=2.5in]{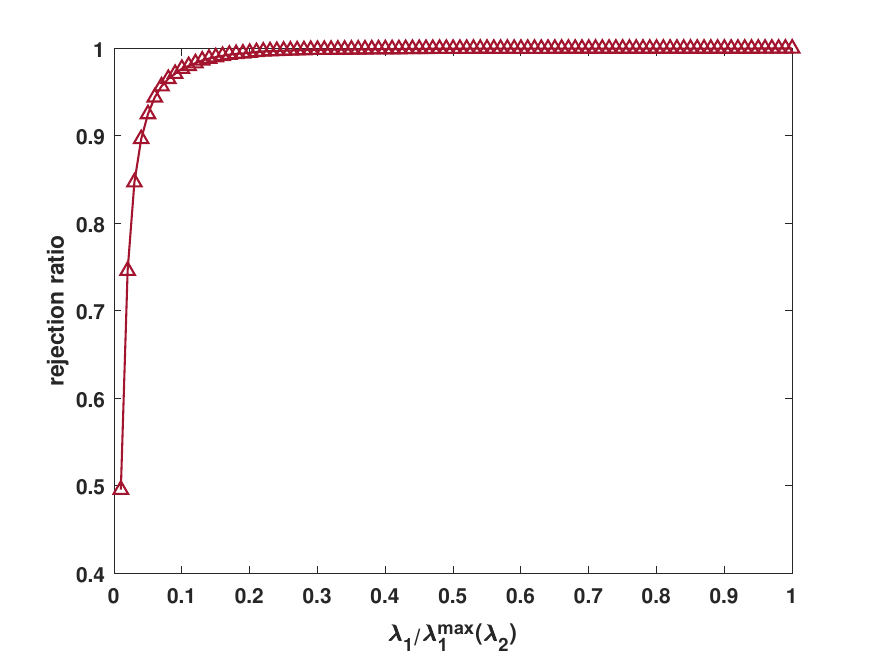}
\end{minipage}
}%
\subfigure[20NewsHome $(18774\times61188)$]{
\begin{minipage}[t]{0.5\linewidth}
\centering
\includegraphics[width=2.5in]{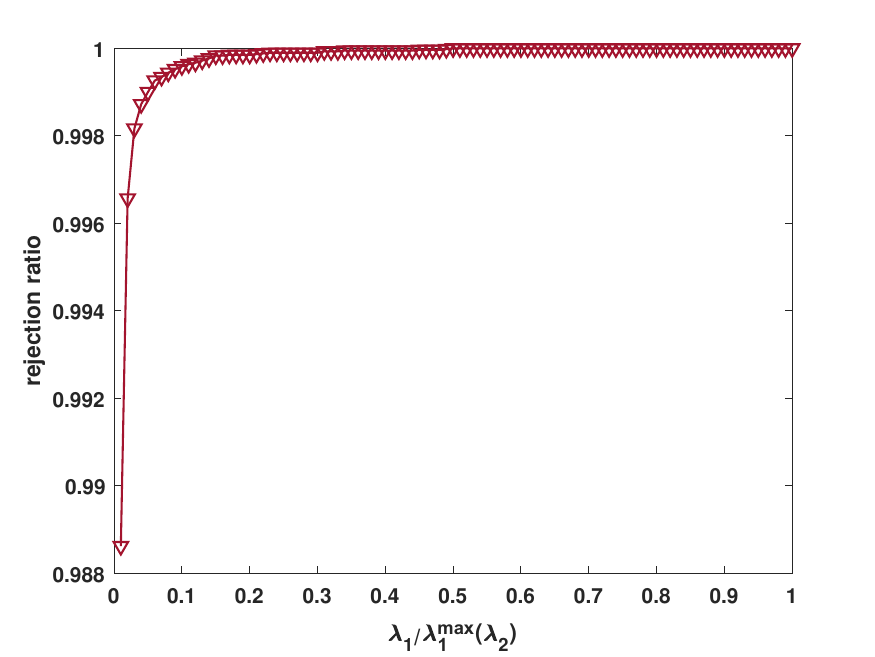}
\end{minipage}
}

\subfigure[TDT2 $(500\times36771)$]{
\begin{minipage}[t]{0.5\linewidth}
\centering
\includegraphics[width=2.5in]{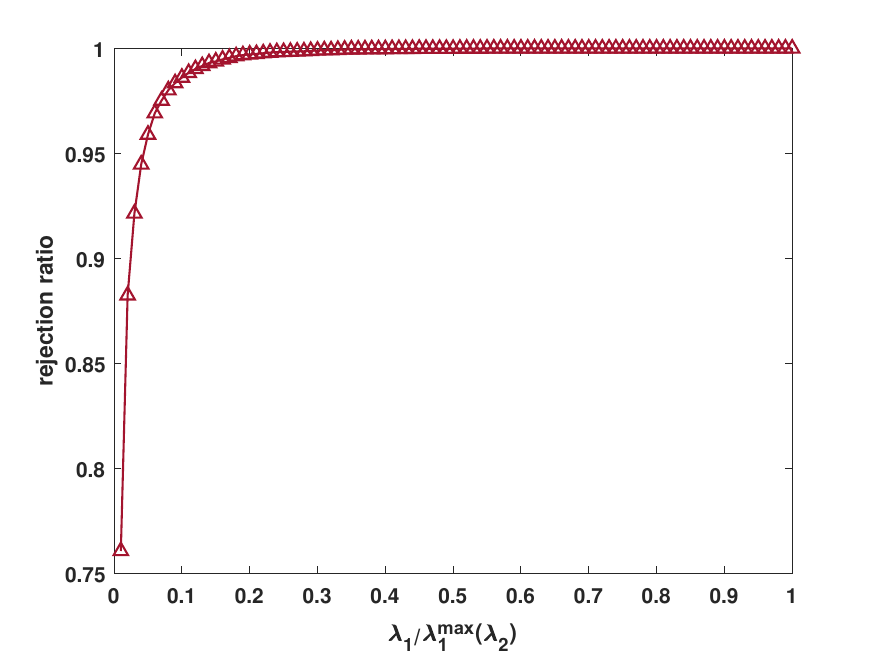}
\end{minipage}
}
\subfigure[Leukemia $(38\times7128)$]{
\begin{minipage}[t]{0.5\linewidth}
\centering
\includegraphics[width=2.5in]{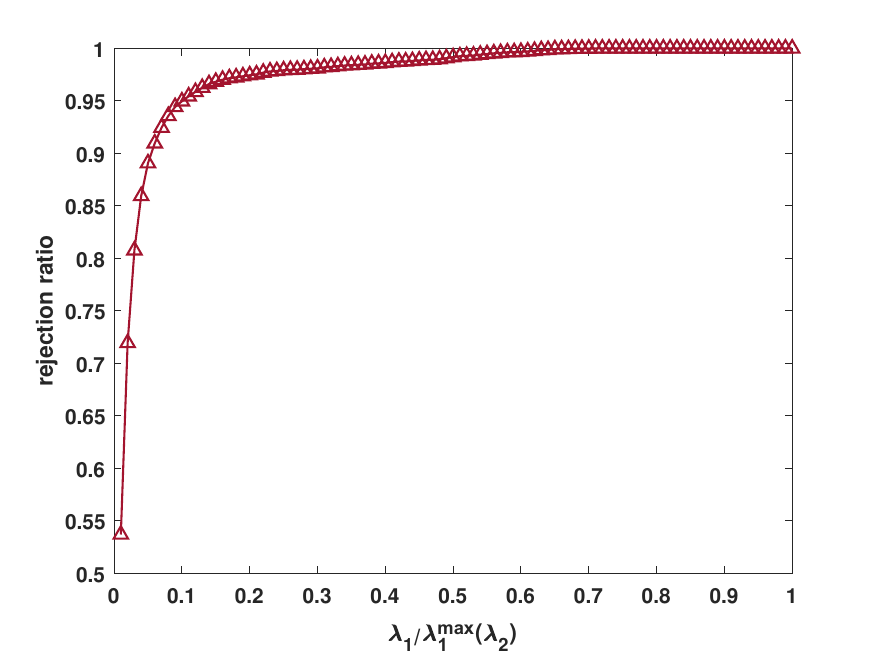}
\end{minipage}
}
\caption{The rejection ratio under different data sets.}
\end{figure}
\begin{table}[htbp]
\caption{The speedup of different real data sets. Here, SLEP+ means SLEP embedded with our rule.}
\centering
\begin{tabular}{|c|ccccc|}
\hline
data sets     & SLEP   & SLEP+ & our rule & \textbf{speedup} &\textcolor{black}{sparse level}\\ \hline
Prostate      & 177.6 & 38.72               & 4.460         & \textbf{4.586}  &\textcolor{black}{[0.998,1]} \\
Srbct         & 49.10  & 10.14               & 1.037         & \textbf{4.843} &\textcolor{black}{[0.990,1]} \\
20NewsHome    & 706.0 & 143.6                & 97.14        & \textbf{4.916}  &\textcolor{black}{[0.999,1]}\\
Reusters21578 & 492.5 & 67.70                & 24.99        & \textbf{7.274}  &\textcolor{black}{[0.996,1]} \\
TDT2          & 425.0 & 52.00                & 37.47        & \textbf{8.174}  &\textcolor{black}{[0.998,1]}\\
Leukemia      & 155.0 & 8.781                & 2.330         & \textbf{17.65} &\textcolor{black}{[0.998.1]}\\ \hline
\end{tabular}
\end{table}

According to Fig. 2 and TABLE 3, we know that our rule has a good performance on these real data sets. From results in Fig. 2, we know that our rule has a good performance in rejection ratio, especially on Reuters21578, 20NewsHome, TDT2 and Leukemia. On these four data sets, the rejection ratios of $\lambda_{1}/\lambda^{max}_{1}(\lambda_{2})=0.01$ are larger than 0.5, 0.98, 0.75 and 0.5, respectively. In addition, the rejection ratio of these data sets reach to 1 rapidly fast. Our rule has a better performance on rejection ratios of Reuters21578, 20NewsHome and TDT2 than Leukemia, the speedup of Reuters21578, 20NewsHome and TDT2 should be larger than that of Leukemia, because that our rule eliminates almost all inactive features under every $\lambda_{1}$ and $\lambda_{2}$ for these data sets. However, in TABLE 3, we find that the speedup of Leukemia is the largest one. This is because that our rule has to calculate the values of $|X^{\top}_{.j}\textbf{c}|$ and $\|X_{.j}\|_{2}$, which needs a time complexity $O(n)$. Although the rejection ratios of Reuters21578, 20NewsHome and TDT2 are very ideal, their sample sizes are relatively large, which makes the speedup decreased.

\section{Conclusion}\label{sec:con}
To accelerate the computation of fused Lasso, we propose a safe feature identification rule with the help of an extra dual variable and variational inequality. It is easy to see that the dual problem of fused Lasso contains two variables, while its objective function is only related with one variable. Therefore, we estimate one dual solution via the variational inequality and calculate the bound of the other dual solution under the former estimation. With these estimations, we obtain the safe feature identification rule which has a low computational cost to eliminate inactive features and identify adjacent features with same coefficients.  In the numerical experiments, we evaluate our rule on simulation and real data, which illustrate that our rule is efficient on speeding up the computation of fused Lasso. To the best of our knowledge, the existing screening rules can not deal with fused Lasso, because that there are two regularizers and the conjugate function of the fused term does not have a simple form.
\section*{Acknowledgements}
This work was supported by the National Natural Science Foundation of China (12071022, 12371322), the Project funded by China Postdoctoral Science Foundation (2022M723327) and the National Natural Science Foundation of Beijing (Z220001).
\bibliographystyle{plain}
\bibliography{reference}





\end{document}